\newtheorem{theorem}{Theorem}
\newtheorem{lemma}{Lemma}
\newtheorem{remark}{Remark}
\newtheorem{defi}{Definition}
\begin{document}

\title{On RIC bounds of Compressed Sensing Matrices for Approximating Sparse
Solutions Using $\ell_q$ Quasi Norms
\thanks{This research was partially supported by Taiwan National
Science Council under NSC 101-2115-M-006-005; by Taiwan National
Center for Theoretic Studies (South); by National Natural Science
Foundation of China under grant 11001006 and 91130019/A011702, and
by the fund of State Key Laboratory of Software Development
Environment under grant SKLSDE-2011ZX-15.} }
\author{\small Yong Hsia \\
{\small State Key Laboratory of Software Development
              Environment; LMIB of the Ministry of Education; }\\
{\small School of Mathematics and System Sciences, Beihang University, Beijing 100191, P. R. China},\\
\small  Ruey-Lin Sheu \\
{\small Department of Mathematics, National Cheng Kung University, Taiwan}
  }

\date{}

\maketitle
\begin{abstract}
This paper follows the recent discussion on the sparse solution
recovery with quasi-norms $\ell_q,~q\in(0,1)$ when the sensing
matrix possesses a Restricted Isometry Constant $\delta_{2k}$ (RIC).
Our key tool is an improvement on a version of ``the converse of a
generalized Cauchy-Schwarz inequality'' extended to the setting of
quasi-norm. We show that, if $\delta_{2k}\le 1/2$, any minimizer of
the $l_q$ minimization, at least for those $q\in(0,0.9181]$, is the
sparse solution of the corresponding underdetermined linear system.
Moreover, if $\delta_{2k}\le0.4931$, the sparse solution can be
recovered by any $l_q, q\in(0,1)$ minimization. The values $0.9181$
and $0.4931$ improves those reported previously in the literature.

\end{abstract}

{\bf Keywords}: compressed sensing,\ restricted isometry constant,\
$l_q$ minimization,\ quasi norm

\section{Introduction}

Given a $m\times n$ matrix $\Phi$ with $m \ll n$ and a nonzero
vector $b\in \mathbb{R}^m$, one of the recently popular problem in
compressed sensing is to find the sparsest solution of the
underdetermined linear systems $\Phi x=b$. Here the sensing matrix
$\Phi$ is assumed to obey a Uniform Uncertainty Principle (UUP) for
every $k$ sparse vector $x$, or to possess a Restricted Isometry
Constant $\delta_{2k}$ (RIC) defined in \cite{C05} as follows:
\begin{defi}
For $k=1,2,\cdots$, the restricted isometry constant is the smallest
number $\delta_k$ such that
\[
(1-\delta_k)\|x\|_2^2\le\|\Phi x\|_2^2 \le (1+\delta_k)\|x\|_2^2
\]
holds for all $k$-sparse vector $x\in \mathbb{R}^n$ with $\|x\|_0\le
k$, where $\|x\|_0$ denotes the number of nonzero elements of $x$.
\end{defi}
For simplicity, we only discuss the sparse solution for noiseless
recovery, as it can be easily extended to the noisy recovery case.
See for example \cite{C08,Lai09}.

As the sparsity is fundamental in signal processing, the problem can
be mathematically formulated as to find
\begin{equation}
x^*=arg\min \{\|x\|_0\mid~  \Phi x=b, x\in \mathbb{R}^n\}.
\label{sp}
\end{equation}
Unfortunately, the formulation (\ref{sp}) is practically intractable
due to its NP-hardness \cite{N95}. For more information on the issue
and related applications in signal and image processing, see
\cite{B09} and references therein.

A common alternative is to consider the following convex problem
using the $\ell_1$ norm
\begin{equation}
x^{(1)}=arg\min \{\|x\|_1\mid~  \Phi x=b, x\in \mathbb{R}^n \}.
\label{lq:1}
\end{equation}
Let $x^*$ be the sparse solution to (\ref{sp}). Define
\begin{equation}
T_0=\{i\in\{1,2,\cdots,n\} \mid ~x^*_i\neq 0\},\label{T0}
\end{equation}
and define $x_{T_0}$ for any $x\in \mathbb{R}^n$ as
\begin{equation}
(x_{T_0})_i=\left\{\begin{array}{cl}x_i,& {\rm if}~i\in T_0,
\\0,&{\rm otherwise}.
\end{array} \right. \label{xT0}
\end{equation}
Then, a new $\ell_1$ recovery result, which extended many existing
ones in literature such as \cite{C08,Lai09,C100,F10}, is stated
below.
\begin{theorem} [\cite{Li11}] Suppose
$\delta_{2k}<\frac{77-\sqrt{1337}}{82}\approx 0.4931$, the solution
$x^{(1)}$ of the problem (\ref{lq:1}) satisfies
\[
\|x-x^{(1)}\|_2\le C_0k^{-1/2}\|x-x_{T_0}\|_1,~~\forall x\in\{x\mid
\Phi x=b\},
\]
where $C_0$ is a positive constant dependent on $\delta_{2k}$. It
follows that  if $x$ is k-sparse, the recovery is exact.
\end{theorem}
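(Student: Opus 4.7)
The plan is to prove the bound by the standard Candès--Candès--Tao ``cone and tube'' scheme, but with the crucial cross-term estimate sharpened so that the threshold condition on $\delta_{2k}$ reduces to the quadratic with root $(77-\sqrt{1337})/82$.

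\textbf{Setup and cone condition.} Write $h = x^{(1)} - x$, so that $\Phi h = 0$ by feasibility of both vectors. Order the indices outside $T_0$ in non-increasing order of $|h_i|$ and partition $T_0^c$ into consecutive blocks $T_1, T_2, \ldots$ of size $k$. From the $\ell_1$ optimality of $x^{(1)}$ in (\ref{lq:1}), combined with the (reverse) triangle inequality applied over $T_0$ and $T_0^c$, one extracts the cone condition
\[
\|h_{T_0^c}\|_1 \le \|h_{T_0}\|_1 + 2\|x_{T_0^c}\|_1 .
\]
Monotonicity of the blocks yields $\|h_{T_j}\|_2 \le k^{-1/2}\|h_{T_{j-1}}\|_1$ for $j\ge 2$ and hence
\[
\sum_{j\ge 2}\|h_{T_j}\|_2 \;\le\; k^{-1/2}\|h_{T_0^c}\|_1 \;\le\; \|h_{T_0}\|_2 + 2k^{-1/2}\|x_{T_0^c}\|_1 .
\]

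\textbf{Tube step.} Using $\Phi h = 0$ together with the lower RIC inequality on the $2k$-sparse vector $h_{T_0\cup T_1}$ gives
\[
(1-\delta_{2k})\|h_{T_0\cup T_1}\|_2^2 \;\le\; \|\Phi h_{T_0\cup T_1}\|_2^2 \;=\; -\sum_{j\ge 2}\bigl(\langle \Phi h_{T_0},\Phi h_{T_j}\rangle + \langle \Phi h_{T_1},\Phi h_{T_j}\rangle\bigr).
\]
At this point one would apply a sharpened inner-product estimate rather than the classical bound $|\langle \Phi u,\Phi v\rangle|\le \delta_{2k}\|u\|_2\|v\|_2$ for disjointly supported $u,v$ of size $\le k$. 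Concretely, I would invoke a ``converse of generalized Cauchy--Schwarz''-type lemma bounding $|\langle \Phi u,\Phi v\rangle|$ when $u$ has support of size $\le 2k$ and $v$ has disjoint support of size $\le k$, by a quantity whose effective constant is strictly smaller than the crude $\sqrt{2}\,\delta_{2k}$ that arises from applying the size-$k$ bound to $h_{T_0}$ and $h_{T_1}$ separately.

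\textbf{Closing the loop.} Substituting the refined inner-product estimate into the tube inequality produces a self-bound of the shape
\[
\bigl(1 - \delta_{2k} - A(\delta_{2k})\bigr)\|h_{T_0\cup T_1}\|_2 \;\le\; 2A(\delta_{2k})\,k^{-1/2}\|x_{T_0^c}\|_1
\]
where $A(\delta_{2k})$ collects the refined coefficients from the cross terms and the tail summation. Meaningfulness of this bound requires the left-hand coefficient to be positive, which translates into a quadratic condition in $\delta_{2k}$; its smaller root is precisely $(77-\sqrt{1337})/82$. Combining this with $\|h_{(T_0\cup T_1)^c}\|_2 \le \sum_{j\ge 2}\|h_{T_j}\|_2 \le k^{-1/2}\|h_{T_0^c}\|_1$ and the cone condition recovers $\|h\|_2 \le C_0\,k^{-1/2}\|x_{T_0^c}\|_1$ with $C_0 = C_0(\delta_{2k})$ explicit.

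\textbf{Main obstacle.} The technical crux is the derivation and the optimal application of the refined cross-term inequality: any slack there immediately weakens the threshold. One must balance the contributions of $\langle \Phi h_{T_0},\Phi h_{T_j}\rangle$ and $\langle \Phi h_{T_1},\Phi h_{T_j}\rangle$ against each other, exploiting not only the pairwise disjointness of $T_0$, $T_1$ and $T_j$ but also the fact that $T_0\cup T_1$ is itself $2k$-sparse. Choosing the split optimally — e.g.\ via a sparse-polytope decomposition of $h_{T_0\cup T_1}$ into extremal $2k$-sparse directions — is what upgrades the classical Candès threshold $1/(1+\sqrt{2})\approx 0.4142$ to the stated value $0.4931$.
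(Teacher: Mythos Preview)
The theorem is quoted from \cite{Li11} and the paper does not reproduce a full proof of the $\ell_2$--$\ell_1$ error bound; it does, however, re-derive the threshold $(77-\sqrt{1337})/82$ as the $q=1$ specialization of its own machinery (the paragraph containing (\ref{q=1})). Comparing your proposal to that derivation shows a misidentification of where the sharpening occurs.

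You locate the improvement in the cross-term estimate $\langle \Phi h_{T_0\cup T_1}, \Phi h_{T_j}\rangle$, proposing a refined bilinear bound (or a polytope decomposition of $h_{T_0\cup T_1}$) to lift Cand\`es' $\sqrt{2}-1$ to $0.4931$. That is not what produces this constant. In the Li--Mo argument, and in the paper's $q=1$ case, the cross terms are handled with the \emph{standard} bound $|\langle \Phi h_{T_i},\Phi h_{T_j}\rangle|\le \delta_{2k}\|h_{T_i}\|_2\|h_{T_j}\|_2$ (see Lemma~\ref{lem:6}). The gain comes instead from sharper control of the tail block norms $\|h_{T_j}\|_2$, $j\ge 2$: one applies the Cai--Wang--Xu converse Cauchy--Schwarz inequality (Lemma~\ref{lem:00}, i.e.\ the $q=1$, $p_1=1/4$ case of Lemma~\ref{lem:new}) to each block, obtaining
\[
\|h_{T_j}\|_2 \le k^{-1/2}\|h_{T_j}\|_1 + \tfrac{\sqrt{k}}{4}\bigl(\max_{i\in T_j}|h_i|-\min_{i\in T_j}|h_i|\bigr),
\]
and then telescopes the max--min terms across consecutive blocks. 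Together with a parameter $t\in(0,1]$ recording the fraction of the $\ell_1$ tail mass carried by $T_1$, this yields the quadratic-in-$t$ condition $2\delta_{2k}+(1-\tfrac{3}{2}\delta_{2k})t-(2-\tfrac{25}{16}\delta_{2k})t^2<1$, whose discriminant condition is $41\delta_{2k}^2-77\delta_{2k}+28>0$, giving exactly the root $(77-\sqrt{1337})/82$.

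Your route via refined inner-product bounds or sparse-polytope decomposition is a genuine technique elsewhere in the RIP literature, but it leads to different constants; nothing in your sketch would generate the specific polynomial $41\delta^2-77\delta+28$, and you have not supplied the concrete bilinear lemma you invoke. The ``converse of a generalized Cauchy--Schwarz inequality'' in this paper is a norm comparison for a \emph{single} vector (Lemmas~\ref{lem:00}--\ref{lem:new}), not a bilinear estimate, and it enters at the tail-sum step (Lemma~\ref{lem:4}), not at the tube step.
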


On the other hand, nonconvex $\ell_q$ quasi-norm minimization with
$q\in(0,1)$ is also considered to recover the sparse solution
\cite{G03}. It is to solve, for a number $q\in(0,1)$,
\begin{equation}
x^{(q)}=arg\min \{\|x\|_q \mid~  \Phi x=b, x\in \mathbb{R}^n \}.
\label{lq}
\end{equation}
More studies on the nonconvex $\ell_q$ minimization problem can be
found in \cite{Ch07,Lai09,Lai11,D09}. In particular, the following
two theorems will be strengthened in this paper.

\begin{theorem}[\cite{Lai09}]\label{thm:09}
Suppose $\delta_{2k}<2(3-\sqrt{2})\approx 0.4531$. Then for any $q\in(0,1]$
\[
\|x-x^{(q)}\|_q\le C_0\|x-x_{T_0}\|_q,~~\forall x\in\{x\mid \Phi
x=b\},
\]
where $C_0$ is a positive constant dependent on $\delta_{2k}$. In
particular, if $x=x^*$ is k-sparse, the recovery is exact.
\end{theorem}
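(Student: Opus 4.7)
The plan is to mirror the Cand\`es--Tao dual-support argument, replacing every step that relies on the triangle inequality by a version that respects the $q$-th-power subadditivity of the quasi-norm. Set $h = x^{(q)}-x$; since $\Phi x=\Phi x^{(q)}=b$, we have $\Phi h = 0$. The starting point is the optimality $\|x^{(q)}\|_q^q \le \|x\|_q^q$. Splitting over $T_0$ and $T_0^c$ and applying the elementary inequalities $|a+b|^q \le |a|^q + |b|^q$ and $|a-b|^q \ge |b|^q - |a|^q$, both valid for $q\in(0,1]$, one extracts the quasi-norm cone condition
\[
\|h_{T_0^c}\|_q^q \le \|h_{T_0}\|_q^q + 2\|x_{T_0^c}\|_q^q .
\]

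Next, I would order the entries of $h$ on $T_0^c$ by decreasing magnitude and partition $T_0^c$ into consecutive blocks $T_1,T_2,\dots$ of size $k$. Because every entry on $T_{j+1}$ is no larger than any entry on $T_j$, raising to the $q$-th power and averaging over $T_j$ gives the key quasi-norm analogue of Cauchy--Schwarz,
\[
\|h_{T_{j+1}}\|_2 \le k^{\frac{1}{2}-\frac{1}{q}} \|h_{T_j}\|_q .
\]
Summing and using that the $\ell_p$-functional is non-increasing in $p$ (so $\sum_j a_j \le (\sum_j a_j^q)^{1/q}$ for $q\le 1$ and $a_j\ge 0$) converts the series $\sum_{j\ge 2}\|h_{T_j}\|_2$ into a multiple of $k^{1/2-1/q}\|h_{T_0^c}\|_q$.

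The third step brings in the RIP. Because $\Phi h=0$, one has $\Phi h_{T_0\cup T_1} = -\sum_{j\ge 2}\Phi h_{T_j}$, and the standard estimate $|\langle \Phi u,\Phi v\rangle|\le \delta_{2k}\|u\|_2\|v\|_2$ for disjointly supported $k$-sparse $u,v$ produces
\[
(1-\delta_{2k})\|h_{T_0\cup T_1}\|_2^2 \le \sqrt{2}\,\delta_{2k}\,\|h_{T_0\cup T_1}\|_2 \sum_{j\ge 2}\|h_{T_j}\|_2 .
\]
Chaining with the previous step gives $\|h_{T_0\cup T_1}\|_2 \le C(\delta_{2k})\,k^{1/2-1/q}\|h_{T_0^c}\|_q$. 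The final ingredient is the power-mean bound $\|h_{T_0}\|_q^q \le k^{1-q/2}\|h_{T_0\cup T_1}\|_2^q$; substituting it into the cone condition collapses everything to a self-bounding inequality
\[
\|h_{T_0^c}\|_q^q \le A(\delta_{2k},q)\,\|h_{T_0^c}\|_q^q + 2\|x_{T_0^c}\|_q^q ,
\]
and the hypothesis $\delta_{2k}<2(3-\sqrt{2})$ is precisely what forces $A<1$ uniformly in $q\in(0,1]$, delivering the stated estimate with $C_0$ depending only on $\delta_{2k}$.

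The main obstacle is the persistent loss incurred whenever the argument shifts between $\ell_2$ (where the RIP lives) and $\ell_q$ (where the objective lives). Since $\|\cdot\|_q$ is not a norm for $q<1$, each triangle-inequality step must be rewritten as its $q$-power surrogate, and the rearrangement bound that bridges $\ell_q$ mass on $T_j$ to $\ell_2$ mass on $T_{j+1}$ must use the sharpest possible constant; any extra slack here directly tightens the admissible RIC threshold. The interplay among these pieces is what produces the specific numerical value $2(3-\sqrt{2})$ rather than a cleaner closed-form constant, and it is also where the authors of the current paper will later find room to push the threshold higher.
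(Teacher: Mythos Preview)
This theorem is not proved in the present paper at all: it is quoted, with attribution to Foucart and Lai \cite{Lai09}, purely as background that the authors then \emph{improve} (to $\delta_{2k}\le 0.4931$, in their Theorem~5) via the new converse Cauchy--Schwarz inequality of Lemma~3. There is therefore no ``paper's own proof'' of this statement for your attempt to be compared against.

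That said, your sketch is a faithful outline of the original Foucart--Lai argument: the $q$-th-power cone condition from optimality, the size-$k$ block decomposition of $h_{T_0^c}$, the rearrangement bound $\|h_{T_{j+1}}\|_2\le k^{1/2-1/q}\|h_{T_j}\|_q$, the RIP cross-term estimate, and the final self-bounding step are exactly the ingredients they use. Two remarks. First, the displayed threshold contains a typo in the paper: $2(3-\sqrt{2})\approx 3.17$, whereas the Foucart--Lai constant is $2(3-\sqrt{2})/7\approx 0.4531$. Second, if you chain your steps in the most na\"ive way --- absorbing $\|h_{T_0}\|_2+\|h_{T_1}\|_2$ into $\sqrt{2}\,\|h_{T_0\cup T_1}\|_2$ and then closing the loop --- you land at $\sqrt{2}\,\delta_{2k}/(1-\delta_{2k})<1$, i.e.\ $\delta_{2k}<\sqrt{2}-1\approx 0.4142$, not $0.4531$. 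Reaching the sharper constant requires keeping the $T_1$ contribution separate a little longer and optimising the resulting two-term inequality, rather than collapsing it immediately with the $\sqrt{2}$ factor; this is where the specific irrational $2(3-\sqrt{2})/7$ appears. Your proposal identifies all the right pieces but, as written, would yield the slightly weaker numerical threshold.
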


\begin{theorem}[\cite{Lai11}]\label{thm:11}
Suppose $\delta_{2k}<1/2$. There exists a number $q_0\in(0,1]$ such
that for any $q<q_0$, each minimizer $x^q$ of the $\ell_q$
minimization (\ref{lq}) is the sparse solution of (\ref{lq}).
Furthermore, there exists a positive constant $C_q$ dependent on $q$
and $\delta_{2k}$ such that
\[
\|x-x^{(q)}\|_q\le C_q\|x-x_{T_0}\|_q,~~\forall x\in\{x\mid \Phi
x=b\}.
\]
\end{theorem}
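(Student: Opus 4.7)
The plan is to adapt the standard null-space/RIC argument for $\ell_1$ recovery to the $\ell_q$ setting, then exploit the extra freedom of letting $q\to 0^+$ to relax the admissible RIC threshold up to $1/2$.

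Fix a feasible $x$ with $\Phi x=b$ and set $h=x^{(q)}-x$, so $\Phi h=0$. From the $\ell_q$-optimality of $x^{(q)}$, $\|x+h\|_q^q\le\|x\|_q^q$; splitting the $q$-sum over $T_0$ and $T_0^c$ and applying the quasi-triangle inequalities $|a|^q - |b|^q \le |a+b|^q \le |a|^q + |b|^q$, valid for $q\in(0,1]$, gives the cone condition
\[
\|h_{T_0^c}\|_q^q \;\le\; \|h_{T_0}\|_q^q + 2\,\|x_{T_0^c}\|_q^q.
\]
Next, partition $T_0^c$ into consecutive blocks $T_1,T_2,\ldots$ of size $k$ (the last possibly smaller), each collecting the next largest entries of $|h|$. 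The magnitude-decay estimate $\|h_{T_j}\|_2\le k^{1/2-1/q}\|h_{T_{j-1}}\|_q$ for $j\ge 2$, together with the elementary inequality $\sum_j a_j\le(\sum_j a_j^q)^{1/q}$ valid for $q\in(0,1]$ and $a_j\ge 0$, yields
\[
\sum_{j\ge 2}\|h_{T_j}\|_2 \;\le\; k^{1/2-1/q}\,\|h_{T_0^c}\|_q.
\]

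I would then exploit $\Phi h=0$ via $\Phi h_{T_0\cup T_1}=-\sum_{j\ge 2}\Phi h_{T_j}$, combining the RIC lower bound with the disjoint-support inner-product estimate $|\langle\Phi u,\Phi v\rangle|\le \delta_{2k}\|u\|_2\|v\|_2$ for $k$-sparse $u,v$ (the ``converse Cauchy--Schwarz'' for RIC) applied block-by-block on $T_0$ and $T_1$, to derive an inequality of the form
\[
\|h_{T_0\cup T_1}\|_2 \;\le\; \mu_q(\delta_{2k})\cdot k^{1/2-1/q}\,\|h_{T_0^c}\|_q
\]
for some explicit coefficient $\mu_q(\delta_{2k})$. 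The $k$-sparse norm inequality $\|h_{T_0}\|_q\le k^{1/q-1/2}\|h_{T_0}\|_2$ then converts this into $\|h_{T_0}\|_q^q\le \mu_q(\delta_{2k})^q\|h_{T_0^c}\|_q^q$; substituting into the cone condition yields
\[
\bigl(1-\mu_q(\delta_{2k})^q\bigr)\,\|h_{T_0^c}\|_q^q \;\le\; 2\,\|x_{T_0^c}\|_q^q,
\]
which, combined with the preceding estimate for $\|h_{T_0}\|_q$ and the identity $\|h\|_q^q=\|h_{T_0}\|_q^q+\|h_{T_0^c}\|_q^q$, gives the stability bound $\|h\|_q\le C_q\|x_{T_0^c}\|_q$ whenever $\mu_q(\delta_{2k})^q<1$; specializing to $x=x^*$ forces $h=0$.

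The main obstacle is to engineer $\mu_q$ so that the strict inequality $\mu_q(\delta_{2k})^q<1$ can be arranged for every $\delta_{2k}<1/2$ by choosing $q$ small enough. The crudest estimate, based on $\|h_{T_0}\|_2+\|h_{T_1}\|_2\le\sqrt{2}\,\|h_{T_0\cup T_1}\|_2$, produces $\mu_q=\sqrt{2}\,\delta_{2k}/(1-\delta_{2k})$, whose $q$-th power is less than one precisely when $\delta_{2k}<\sqrt{2}-1\approx 0.414$ — falling short of the claimed threshold. To reach $1/2$ I would sharpen the converse Cauchy--Schwarz step in the quasi-norm setting: either by replacing the $\sqrt{2}$-penalty with a $q$-weighted bound that loosens as $q\to 0^+$, or by applying the inner-product estimate to a $q$-adapted regrouping of $T_0\cup T_1$ rather than the naive split into $T_0$ and $T_1$. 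Such a refinement is designed to yield a $\mu_q$ whose critical condition $\mu_q(\delta)<1$ is equivalent to $\delta<c(q)$ with $c(q)\nearrow 1/2$ as $q\searrow 0$; continuity in $q$ then supplies a $q_0\in(0,1]$ with the stated property for any prescribed $\delta_{2k}<1/2$.
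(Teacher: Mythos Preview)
Your framework (cone condition, block decomposition, RIC inner-product estimates) is the right one and matches what the paper uses, but the crucial step---getting the admissible threshold up to $1/2$---is left as a hand-wave, and the specific fixes you propose do not work. The $\sqrt{2}$ in your crude $\mu_q=\sqrt{2}\,\delta_{2k}/(1-\delta_{2k})$ comes from $\|h_{T_0}\|_2+\|h_{T_1}\|_2\le\sqrt{2}\,\|h_{T_0\cup T_1}\|_2$, which is sharp and has no $q$-dependent slack to exploit; and regrouping $T_0\cup T_1$ to apply the inner-product bound directly would require $\delta_{3k}$ rather than $\delta_{2k}$. So neither suggestion produces a $\mu_q$ with $c(q)\nearrow 1/2$.

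The mechanism actually used (in \cite{Lai11}, and recapitulated here with $p_q=1$) is structurally different from your inner-product argument. Instead of bounding $\langle\Phi h_{T_0\cup T_1},-\sum_{j\ge2}\Phi h_{T_j}\rangle$, one bounds $\|\Phi(h_{T_0}+h_{T_1})\|_2^2=\|\Phi\sum_{j\ge2}h_{T_j}\|_2^2$ from \emph{both} sides. The upper bound expands the square into diagonal terms $\sum_{j\ge2}\|h_{T_j}\|_2^2$ plus cross terms controlled by $\delta_{2k}\bigl(\sum_{j\ge2}\|h_{T_j}\|_2\bigr)^2$; the lower bound keeps the contribution $\|h_{T_1}\|_2^2$ explicit. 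Introducing the parameter $t=\|h_{T_1}\|_q^q\big/\sum_{i\ge1}\|h_{T_i}\|_q^q\in(0,1]$, the resulting sufficient condition reads
\[
2\delta_{2k}+t^{(2-q)/q}-(2-\delta_{2k})t^{2/q}<1,\qquad\forall\,t\in(0,1],
\]
and the point is that $\sup_{t\in(0,1]}\bigl\{t^{(2-q)/q}-(2-\delta_{2k})t^{2/q}\bigr\}\le q/e\to 0$ as $q\to 0^+$. Thus for any $\delta_{2k}<1/2$ one can choose $q_0$ so that the inequality holds for all $q<q_0$. The $q$-dependence enters through the high powers $t^{2/q}$ and $t^{(2-q)/q}$, not through any loosening of the Cauchy--Schwarz constant; your outline does not contain this idea.
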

More specifically, the upper bound of $\delta_{2k}$ in Theorem
\ref{thm:09} can be improved to $0.4931$ in Theorem \ref{sp} of the
paper. Secondly, the sufficient condition $\delta_{2k}<1/2$ in
Theorem \ref{thm:11} can be extended to $\delta_{2k}\le 1/2$ and the
threshold $q_0$ can be precisely estimated to be at least $0.9181$
by Theorem \ref{thm:2}. This is indeed a very surprising result
since $q_0$, if it would have been computed by the analysis in
\cite{Lai11}, is only $0.0513$. Our main tool to achieve these
results is an improvement on a version of ``the converse of a
generalized Cauchy-Schwarz inequality'' extended to the setting of
$\ell_q$ quasi-norms. The key inequality is stated and proved in
Section 2 below.

\section{A Key Inequality}

According to Cauchy-Schwarz inequality, we have the standard
inequality
\[
\|x\|_2\ge \frac{\|x\|_1}{\sqrt{n}}.
\]
The following converse of the above inequality is very recent:
\begin{lemma}[\cite{Cai10}]\label{lem:00}
For any $x\in \mathbb{R}^n$,
\begin{equation*}
\|x\|_2\le
\frac{\|x\|_1}{\sqrt{n}}+\frac{\sqrt{n}}{4}\left(\max_{1\leq i\le
n}|x_i|-\min_{1\leq i\le n}|x_i|\right).
\end{equation*}
\end{lemma}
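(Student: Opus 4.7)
\textbf{Proof proposal for Lemma~\ref{lem:00}.} Since $\|x\|_1$, $\|x\|_2$, $\max |x_i|$, $\min |x_i|$ are all invariant under $x_i \mapsto |x_i|$, I would first reduce to the case $x_i\ge 0$ for all $i$. Write $m=\min_i x_i$ and $M=\max_i x_i$, and set $S=\|x\|_1$. The point of the argument is to squeeze $\|x\|_2^2$ between two quadratic functions of $S$ and then take the square root cleanly.

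The first main step is the pointwise quadratic bound: since $x_i\in[m,M]$, we have $(x_i-m)(M-x_i)\ge 0$, i.e.\ $x_i^2\le (M+m)x_i-Mm$. Summing over $i=1,\dots,n$ gives
\begin{equation*}
\|x\|_2^2\;\le\;(M+m)\,S-nMm.
\end{equation*}
This is the only inequality in the whole proof that uses the structure of $x$; everything after this is algebra in the single variable $\alpha:=S/n\in[m,M]$.

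The second step is to show that the bound obtained is consistent with the claimed estimate. Dividing by $n$, the target inequality becomes
\begin{equation*}
\sqrt{(M+m)\alpha-Mm}\;\le\;\alpha+\tfrac{M-m}{4},
\end{equation*}
after multiplying both sides by $\sqrt{n}$. Both sides are nonnegative (the right side clearly, the left because of step one applied to the constant vector $\alpha\mathbf{1}$, or more simply because $(M+m)\alpha-Mm\ge (M+m)m-Mm=m^2\ge 0$), so squaring is legitimate and the inequality is equivalent to the quadratic
\begin{equation*}
Q(\alpha):=\alpha^2-\tfrac{M+3m}{2}\alpha+Mm+\tfrac{(M-m)^2}{16}\;\ge\;0.
\end{equation*}

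The third step is to verify $Q(\alpha)\ge 0$ for every real $\alpha$ by computing its discriminant. A direct calculation gives
\begin{equation*}
\Delta \;=\;\tfrac{(M+3m)^2}{4}-4Mm-\tfrac{(M-m)^2}{4}\;=\;\tfrac{(M+3m)^2-(M-m)^2}{4}-4Mm\;=\;2m(M+m)-4Mm\;=\;-2m(M-m),
\end{equation*}
which is $\le 0$ since $0\le m\le M$. Hence $Q\ge 0$ everywhere, and in particular at our $\alpha\in[m,M]$, completing the argument.

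The step I expect to be the most delicate is not any single computation but the decision to reduce the two-sided bound (quadratic upper bound on $\|x\|_2^2$) to a single-variable quadratic in the \emph{average} $\alpha=S/n$ rather than in $S$ itself; keeping the $\sqrt{n}$ factors balanced is what allows the right-hand side $S/\sqrt{n}+\sqrt{n}(M-m)/4$ to emerge with precisely the constant $1/4$. The rest is a discriminant check that is routine once the substitution is in place.
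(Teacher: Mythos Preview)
Your argument is correct. The key steps---the Kantorovich-type bound $x_i^2\le (M+m)x_i-Mm$ from $(x_i-m)(M-x_i)\ge 0$, the reduction to a one-variable quadratic in $\alpha=S/n$, and the discriminant computation $\Delta=-2m(M-m)\le 0$---are all valid as written. One small cosmetic point: you could avoid the discriminant entirely by noting that $Q(\alpha)=\bigl(\alpha-\tfrac{M+3m}{4}\bigr)^2+\tfrac{m(M-m)}{2}\ge 0$, which makes the nonnegativity and the equality case ($m=0$ or $M=m$, with $\alpha=(M+3m)/4$) transparent.

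As for comparison with the paper: the paper does not prove this lemma at all; it merely quotes it from Cai--Wang--Xu (reference~\cite{Cai10}) and uses it as background for the $q<1$ generalization in Lemma~\ref{lem:new}. Your proof is essentially the standard argument for this inequality (and is in the spirit of the original source), so there is nothing to contrast on the method side. If anything, note that the paper's own Lemma~\ref{lem:new} would, at $q=1$, give exactly your statement with $p_1=1/4$, but the proof technique there (convexity of $f(x)=\|x\|_2-\|x\|_q/n^{1/q-1/2}$ componentwise, then pushing to extreme points) is rather different from your quadratic-bound approach; your route is shorter and more elementary for the $q=1$ case, while the paper's approach is what generalizes to $q\in(0,1)$.
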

On the other hand, Cauchy-Schwarz inequality can be extended to the
setting of quasi-norm $\|x\|_q,~q\in(0,1)$ with
\begin{equation}
\|x\|_2\ge \frac{\|x\|_q}{n^{1/q-1/2}}\label{quasi-norm:1}
\end{equation}
by using  H$\rm \ddot{o}$lder's inequality. The first converse of
(\ref{quasi-norm:1}) was proposed in (\cite{Lai11}).
\begin{lemma}[\cite{Lai11}]\label{lem:01}
Fix $0<q<1$. For any $x\in \mathbb{R}^n$,
\begin{equation}
\|x\|_2\le \frac{\|x\|_q}{ n^{1/q-1/2} }+ \sqrt{n} \left(\max_{1\leq
i\le n}|x_i|-\min_{1\leq i\le n}|x_i|\right).\label{Lsai-main}
\end{equation}
\end{lemma}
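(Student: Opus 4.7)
The plan is to decouple the two sides via the trivial pointwise bounds $m\le |x_i|\le M$, where $M=\max_i|x_i|$ and $m=\min_i|x_i|$. Observe first that all three quantities appearing in (\ref{Lsai-main}) depend only on $|x_i|$, so I may assume throughout that $x_i\ge 0$.

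First I would bound the left-hand side from above using only the maximum: since $x_i\le M$ for every $i$, $\|x\|_2^2=\sum_{i=1}^n x_i^2\le nM^2$, hence $\|x\|_2\le \sqrt{n}\,M$. Next I would bound the first summand on the right-hand side from below using only the minimum: since $x_i\ge m\ge 0$, $\|x\|_q^q=\sum_{i=1}^n x_i^q\ge nm^q$, so $\|x\|_q\ge n^{1/q}m$ and therefore $\|x\|_q/n^{1/q-1/2}\ge\sqrt{n}\,m$.

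Combining these two estimates,
\[
\|x\|_2\le \sqrt{n}\,M=\sqrt{n}(M-m)+\sqrt{n}\,m\le \sqrt{n}(M-m)+\frac{\|x\|_q}{n^{1/q-1/2}},
\]
which is exactly (\ref{Lsai-main}).

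I do not anticipate any real obstacle in executing this plan; the argument is essentially two one-line sandwiches glued together. The only delicate point, which is conceptual rather than technical, is that the lower bound $\|x\|_q^q\ge nm^q$ uses solely the minimum of $|x_i|$ and completely ignores that the sum actually contains contributions as large as $M^q$. This slack is exactly why the resulting coefficient in front of $(M-m)$ is $\sqrt{n}$, as opposed to the sharper $\sqrt{n}/4$ obtained in Lemma~\ref{lem:00} for the $\ell_1$--$\ell_2$ case; and it is precisely this slack that the present paper intends to exploit. A sharper version would presumably proceed by a convexity/extreme-point analysis: writing $y_i=x_i^q\in[m^q,M^q]$, the quantity $\sum y_i^{2/q}$ is a convex function of $(y_i)$ since $2/q>2$, so under the $\ell_q$ constraint its supremum is attained at vertices where each $y_i\in\{m^q,M^q\}$, reducing the problem to a one-parameter optimization over the number $k$ of coordinates equal to $M$.
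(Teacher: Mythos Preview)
Your argument is correct. The two sandwich bounds $\|x\|_2\le \sqrt{n}\,M$ and $\|x\|_q/n^{1/q-1/2}\ge \sqrt{n}\,m$ are valid for every $q>0$, and subtracting them gives (\ref{Lsai-main}) immediately.

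There is no proof to compare against directly: the paper does not prove Lemma~\ref{lem:01} but merely cites it from \cite{Lai11}, and then establishes the strictly sharper Lemma~\ref{lem:new} with the coefficient $p_q<1$ in place of $1$. The paper's proof of Lemma~\ref{lem:new} (which, as the text indicates, follows the approach of \cite{Lai11}) is exactly the convexity/extreme-point scheme you outline in your last paragraph: normalize to $x_1=1$, observe that $f(x)=\|x\|_2-\|x\|_q/n^{1/q-1/2}$ is convex in each coordinate so its maximum over the ordered simplex is attained with each $x_i\in\{1,x_n\}$, use convexity in $x_n$ together with $h(1)=0$ to reduce to $x_n=0$, and finally optimize $\sqrt{r}-r^{1/q}/n^{1/q-1/2}$ over the number $r$ of coordinates equal to $1$. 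This last one-parameter optimization is what produces the constant $p_q=(q/2)^{q/(2-q)}-(q/2)^{2/(2-q)}$.

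Your two-line route is therefore genuinely different and more elementary than the machinery in either \cite{Lai11} or the present paper; the price is that, because you discard all information about $\|x\|_q$ except the single bound $\|x\|_q\ge n^{1/q}m$, you cannot beat the coefficient $\sqrt{n}$. The convexity argument retains the full constraint and is what buys the improvement to $p_q\sqrt{n}$, which is the whole point of the paper.
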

Our key result, Lemma \ref{lem:new} below, gives a sharpened
estimation on the right hand side of (\ref{Lsai-main}). When $p_q$
in (\ref{p_q}) is replaced by $1$ for any $q\in(0,1)$, Lemma
\ref{lem:new} reduces to Lemma \ref{lem:01}.

\begin{lemma}\label{lem:new}
For $q\in(0,1)$ and $x\in \mathbb{R}^n$, there is
\begin{equation}
\|x\|_2\le \frac{\|x\|_q}{ n^{1/q-1/2} }+
p_q\sqrt{n}\left(\max_{1\leq i\le n}|x_i|-\min_{1\leq i\le
n}|x_i|\right), \label{main:0}
\end{equation}
where
\begin{equation}
p_q:=\left(\frac{q}{2}\right)^{{\frac
{q}{2-q}}}-\left(\frac{q}{2}\right)^{{\frac {2}{2-q}}}.\label{p_q}
\end{equation}
Moreover, $p_q$ is a decreasing convex function of $q\in(0,1)$ with
\[
\lim_{q\rightarrow 0}p_q=1~~\hbox{and}~~\lim_{q\rightarrow1}p_q=1/4.
\]
\end{lemma}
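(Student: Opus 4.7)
My plan is to reduce the lemma to a scalar one-parameter inequality via a convex-hull argument, to prove that scalar inequality by bracketing its two sides with a common affine function, and to verify the auxiliary properties of $p_q$ by calculus on the factorization $p_q=(q/2)^{q/(2-q)}(1-q/2)$.

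First, writing $U=\|x\|_2^2/n$ and $V=\|x\|_q^q/n$, every achievable pair $(V,U)$ for $x$ with $|x_i|\in[\beta,\alpha]$ is an $n$-point average of points on the planar curve $\{(v,v^{2/q}):v\in[\beta^q,\alpha^q]\}$. Since $q\in(0,1)$ makes $2/q>1$, this curve is strictly convex, so its convex hull is capped above by the chord joining its endpoints $(\beta^q,\beta^2)$ and $(\alpha^q,\alpha^2)$. Hence $U$ is bounded above by the chord height $U_c(V)$, and equality is attained exactly on the two-value configurations $|x_i|\in\{\alpha,\beta\}$. Since $V^{1/q}$ is unchanged, it suffices to prove the lemma on these extremal configurations, i.e.\ the scalar statement
\begin{equation*}
\sqrt{t\alpha^{2}+(1-t)\beta^{2}}-\bigl(t\alpha^{q}+(1-t)\beta^{q}\bigr)^{1/q}\le p_q(\alpha-\beta),\qquad t\in[0,1].
\end{equation*}

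I would then prove this scalar inequality by inserting the common affine function $\beta+(\alpha-\beta)\sqrt{t}$ and establishing:
\begin{align*}
\mathrm{(A)}\quad & \sqrt{t\alpha^{2}+(1-t)\beta^{2}}\le \beta+(\alpha-\beta)\sqrt{t},\\
\mathrm{(B)}\quad & \bigl(t\alpha^{q}+(1-t)\beta^{q}\bigr)^{1/q}\ge \beta+(\alpha-\beta)t^{1/q}.
\end{align*}
Subtracting (B) from (A) yields LHS $\le(\alpha-\beta)(\sqrt{t}-t^{1/q})$, and the final step is the elementary calculus fact $\max_{t\in[0,1]}(\sqrt{t}-t^{1/q})=p_q$, attained at $t^{\ast}=(q/2)^{2q/(2-q)}$. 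Inequality (A) is routine: squaring, the difference factors as $2\beta(\alpha-\beta)\sqrt{t}(1-\sqrt{t})\ge0$. The main obstacle is (B). After normalizing $\alpha=1$, writing $r=\beta\in[0,1]$, raising to the $q$-th power, and substituting $u=1-t^{1/q}$, $v=1-r$, (B) transforms into the super-multiplicativity
\begin{equation*}
g(uv)\ge g(u)g(v),\qquad g(w):=1-(1-w)^{q},\qquad u,v\in[0,1].
\end{equation*}
I would establish this by showing the elasticity $\psi(w):=wg'(w)/g(w)$ is non-decreasing on $(0,1)$; direct differentiation puts the numerator of $\psi'$ in the form $q(1-w)^{q-2}\bigl[1-qw-(1-w)^{q}\bigr]$, which is nonnegative by Bernoulli's inequality $(1-w)^{q}\le1-qw$ for $q\in(0,1)$.

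Finally, with $u=q/2\in(0,1/2]$ and $F(u)=\log p_q=\frac{u\log u}{1-u}+\log(1-u)$, one computes $F'(u)=\log u/(1-u)^{2}<0$, so $p_q$ strictly decreases in $q$. For convexity, rearranging the sum $(F')^{2}+F''$ gives
\begin{equation*}
(F')^{2}+F''=\frac{(\log u+1-u)^{2}+(1-u)^{3}/u}{(1-u)^{4}}>0,
\end{equation*}
so $p_q=e^{F}$ is strictly log-convex and hence convex. The boundary limits $p_q\to 1$ as $q\to 0^{+}$ and $p_q=1/4$ at $q=1$ follow immediately from the factorization $p_q=(q/2)^{q/(2-q)}(1-q/2)$. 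The only technical step in the whole argument is the super-multiplicativity used in (B); everything else is elementary.
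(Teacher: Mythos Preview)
Your proof is correct and lands on the same final optimisation $\max_{t\in[0,1]}(\sqrt t-t^{1/q})=p_q$ as the paper, but the route to that point is genuinely different in two places. First, you reduce to two-value configurations by a single convex-hull observation on the pair $(V,U)=(\|x\|_q^q/n,\|x\|_2^2/n)$ lying under the chord of the convex curve $v\mapsto v^{2/q}$; the paper instead defines $f(x)=\|x\|_2-\|x\|_q/n^{1/q-1/2}$, shows it is convex in each coordinate, and iteratively drives each coordinate to a boundary value. Your reduction is shorter and more conceptual. Second, once on a two-value vector, the paper reuses the established convexity to get $h(\beta)\le(1-\beta)h(0)$ for free, which is exactly the bound $(1-\beta)(\sqrt t-t^{1/q})$; you recover the same bound through the sandwich (A)/(B), paying with the super-multiplicativity lemma $g(uv)\ge g(u)g(v)$ for $g(w)=1-(1-w)^q$ (your elasticity/Bernoulli argument for it is correct). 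So the paper front-loads a coordinatewise-convexity computation and then gets the last step gratis, while you replace that machinery by a one-line convex-hull reduction at the cost of the extra (B)-lemma. For the properties of $p_q$, your log-convexity calculation in the variable $u=q/2$ and the paper's direct formula $p_q''=\bigl[(2\ln(q/2)+2-q)^2+(2-q)^3/q\bigr]p_q/(2-q)^4$ are the same identity in different coordinates.
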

\begin{proof}
Due to the symmetry of the inequality (\ref{main:0}) in the
components $\vert x_1\vert, \vert x_2\vert, \ldots, \vert x_n\vert$,
we only have to prove the case for $x\in S=\{(x_1,
x_2,\ldots,x_n)\not=0\vert\ x_1\ge x_2\ge \cdots \ge x_n\ge 0\}$
(notice that $x=0$ is a trivial case). Furthermore, suppose the
inequality (\ref{main:0}) is true for $x\in S, x_1=1$. By
substituting $\frac{x}{x_1}, x\in S$ into (\ref{main:0}) and
canceling the common factor $\frac{1}{x_1}$, we immediately
generalize the result to all $x\in S$. In other words, our goal is
to show
\begin{equation}
\|x\|_2\le \frac{\|x\|_q}{ n^{1/q-1/2} }+
p_q\sqrt{n}\left(1-x_n\right),\ x\in S_1=\{x\in S\vert\
x_1=1\}\label{main:1}
\end{equation}
where $p_q$ is a function of $q$ specified in (\ref{p_q}).

Following the approach in \cite{Lai11}, we define for any fixed
$q\in(0,1)$
\begin{equation*}
f(x)=\|x\|_2 -\frac{\|x\|_q}{ n^{1/q-1/2 }}
\end{equation*}
and compute the first order partial derivatives as
\begin{equation*}
\frac{\partial f(x)}{\partial
x_i}=\frac{x_i}{\|x\|_2}-\frac{\|x\|_q^{1-q}x_i^{q-1}}{ n^{1/q-1/2}
}. 
\end{equation*}
Note that, when $x_i$ increases with all other components fixed, the
following two terms
\begin{equation}
\frac{\|x\|_2}{x_i}=\sqrt{\sum_{j=1}^n\left(\frac{x_j}{x_i}\right)^2}\label{df-1}
\end{equation}
and
\begin{equation}
\frac{\|x\|_q^{1-q}x_i^{q-1}}{ n^{1/q-1/2} }=\frac{1}{ n^{1/q-1/2} }
\left(\sum_{j=1}^n(\frac{x_j}{x_i})^{q}\right)^{\frac{1-q}{q}}\label{df-2}
\end{equation}
are both decreasing. As the result, $\frac{\partial f(x)}{\partial
x_i}$ is increasing and $f(x)$ is convex in each of the components
$x_i$ for $i = 1,2,\cdots,n$. Analogously from (\ref{df-1}) and
(\ref{df-2}), we can show that the composite function $g: R^{n-1}
\rightarrow R$ such that
$$g(x_1,x_3,x_4,\ldots,x_n)=f(x_1,x_3,x_3,x_4,\ldots,x_n)$$ is also
convex in the variable $x_3$ while all other components $x_1,
x_4,\ldots,x_n$ remaining fixed. Likewise, we can conclude that
$f(1,\ldots,1,x_n,\ldots,x_n)$ is convex in the variable $x_n$ where
$x_n$ is repeated for a couple of times.

Since the maximum of a convex function always happens on the
boundary, we have
\begin{eqnarray}
&&\max\limits_{1\ge x_2\ge x_3\cdots\ge x_n} f(1, x_2, \ldots, x_n)\nonumber\\
&=&\max\limits_{1\ge x_3\ge\cdots\ge
x_n}\left\{\max\limits_{x_2\in[x_3, 1]}
f(1, x_2, \ldots, x_n)\right\}\nonumber\\
&=&\max\limits_{1\ge x_3\ge\cdots\ge
x_n}\max\left\{f(1,1,x_3,\ldots,x_n),
f(1,x_3,x_3,\ldots,x_n)\right\}\nonumber\\
&=&\max\left\{\max\limits_{1\ge x_3\ge\cdots\ge
x_n}f(1,1,x_3,\ldots,x_n), \max\limits_{1\ge x_3\ge\cdots\ge
x_n}f(1,x_3,x_3,x_4\ldots,x_n)\right\}.\label{df-3}
\end{eqnarray}
In (\ref{df-3}), since $f(1,x_3,x_3,x_4\ldots,x_n)$ is convex in
$x_3$, it follows that
$$\max\limits_{x_3\in[x_4,1]}f(1,x_3,x_3,x_4\ldots,x_n)=\max\{f(1,1,1,x_4,\ldots,x_n),
f(1,x_4,x_4,x_4,x_5,\ldots,x_n)\}.$$ Repeating the arguments
iteratively, we can thus express the maximum of $f$ only in terms of
$1$ and $x_n$ as follows: 
$$h(x_n)=\max\limits_{1\ge x_2\ge x_3\cdots\ge x_n} f(1, x_2, \ldots, x_n)
=f(1,\ldots,1,x_n,\ldots,x_n),~x_n\in[0,1].$$ Suppose the
distribution of $x_1=1$ appears for $r$ times $(1\le r \le n)$ in
the maximum solution of $f$, we have
\begin{equation*}
h(x_n)
=\sqrt{r(1-x_n^2)+nx_n^2}-\frac{(r(1-x_n^q)+nx_n^q)^{1/q}}{n^{1/q-1/2}}.
\end{equation*}
Since $h(x_n)$ is convex and $h(1)=0$, we have
\[
h(x_n)\le (1-x_n) h(0) + x_n h(1)=(1-x_n) h(0).
\]
Then it holds that
\begin{eqnarray}
{f(x)}&\le& {h(x_n)} \nonumber\\
&\le& (1-x_n)h(0) \nonumber\\
&\le&
(1-x_n)\max_{r\in\{1,2,\ldots,n\}} \left\{\sqrt{r( 1^2-0^2)+n0^2}
-\frac{(r(1^q-0^q)+n0^q)^{1/q}}{n^{1/q-1/2}}\right\}\label{fxk}\\
&\le&(1-x_n)\max_{r\in[1,n]} \left\{\sqrt{r}-\frac{r^{1/q}}{n^{1/q-1/2}}\right\}\label{main:1}\\
&=& (1-x_n)\left(\left(\frac{q}{2}\right)^{{\frac
{q}{2-q}}}-\left(\frac{q}{2}\right)^{{\frac
{2}{2-q}}}\right)\sqrt{n}.\nonumber
\end{eqnarray}
where (\ref{fxk}) is an upper bound estimation for $h(0)$ over the
unknown parameter $r$ (the number of times $x_1=1$ is repeated), and
(\ref{main:1}) is a concave maximization problem since $q\in(0,1)$
and $r$ is relaxed to a real number on $[1,n]$.

Finally, it is easy to verify that
\[
\lim_{q\rightarrow 0}p_q=1,~ \lim_{q\rightarrow1}p_q=1/4,
\]
and $p_q$ is a decreasing function of $q\in(0,1)$ since
\begin{eqnarray}
\frac{d}{dq}p_q&=& \frac{(q/2)^{\frac{q}{2-q} }}{(2-q)^2} ( 2\ln(q/2) +2-q)-
\frac{(q/2)^{\frac{2}{2-q} }}{(2-q)^2}( 2\ln(q/2) +\frac{4}{q}-2)\nonumber\\
&=&\frac{ 2\ln(q/2)}{(2-q)^2}p_q<0. \label{eq:2}
\end{eqnarray}
Moreover, the convexity of $p_q$ over $q\in(0,1)$ can be verified by
\begin{eqnarray*}
\frac{d^2}{dq^2}p_q = \frac{(2\ln(q/2)+2-q)^2+(2-q)^3/q}{(2-q)^4}p_q>0.
\end{eqnarray*} and
the proof is thus completed.
\end{proof}



In the following, we give an upper estimate of $p_q$ for small $q$,
which will be used later.
\begin{lemma}\label{lem:a1}
Denote Euler's number by $e=2.718\cdots$. It holds that
\[
p_q< 1+\frac{q\ln(q/2)}{2-q},~\forall
q\in(0,0.4797]\subset(0,1-\sqrt{2}/e].
\]
\end{lemma}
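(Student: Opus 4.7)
The plan is to adapt the Taylor-expansion trick already used in the proof of Lemma~\ref{lem:new}. Set $L:=\frac{q\ln(q/2)}{2-q}$, so that $p_q = e^{L}(2-q)/2$ and the inequality to prove becomes $e^{L}(2-q)/2 < 1+L$. For $q\in(0,1)$ we have $q/2<1$ and $q/(2-q)>0$, hence $L<0$.

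First I would invoke the standard quadratic upper bound $e^{L}\le 1+L+L^{2}/2$ valid for all $L\le 0$; this follows because $g(L):=1+L+L^{2}/2-e^{L}$ satisfies $g(0)=g'(0)=g''(0)=0$ and $g'''(L)=-e^{L}<0$, so $g$ is strictly positive on $(-\infty,0)$. Plugging this bound into $p_q=e^{L}(2-q)/2$ replaces the target inequality by the stronger \emph{sufficient} condition $(1+L+L^{2}/2)(2-q)/2\le 1+L$, which a routine expansion collapses to
\[
(2-q)L^{2}\le 2q(1+L).
\]

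Next, substituting the explicit expression $L=q\ln(q/2)/(2-q)$, clearing the factor $(2-q)$, and using $(\ln(q/2))^{2}-2\ln(q/2)+1=(\ln(q/2)-1)^{2}$, the condition rearranges to
\[
q\bigl[(\ln(q/2)-1)^{2}+1\bigr]\le 4.
\]

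The main step is then to verify this polynomial--logarithmic inequality on $(0,1-\sqrt 2/e]$. Define $f(q):=q[(\ln(q/2)-1)^{2}+1]-4$. A short calculation, using the product rule and then grouping, gives
\[
f'(q)=(\ln(q/2)-1)^{2}+2(\ln(q/2)-1)+1=\bigl(\ln(q/2)\bigr)^{2}\ge 0,
\]
so $f$ is nondecreasing on $(0,\infty)$. A direct evaluation yields $f(2)=2\cdot((\ln 1-1)^{2}+1)-4=0$, whence $f(q)<0$ for every $q\in(0,2)$, and in particular throughout the stated interval $(0,1-\sqrt 2/e]$.

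The main obstacle is spotting that $f'(q)$ collapses to the perfect square $(\ln(q/2))^{2}$, which turns what at first glance looks like an opaque transcendental inequality in $q$ into a one-line monotonicity argument. Everything else is algebraic bookkeeping --- and it is worth noting that this route actually delivers the bound on the whole of $(0,2)$, which is considerably more than the lemma asks for.
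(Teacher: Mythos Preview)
Your argument is correct. Both you and the paper begin with the same quadratic bound $e^{x}<1+x+x^{2}/2$ for $x<0$, but the routes diverge immediately thereafter. The paper keeps the two summands of $p_q=(q/2)^{q/(2-q)}-(q/2)^{2/(2-q)}$ separate, bounds only the first by the Taylor polynomial, and is then left to show that the quadratic remainder $\tfrac{q^{2}\ln^{2}(q/2)}{2(2-q)^{2}}$ is dominated by the subtracted term $(q/2)^{2/(2-q)}$; this is handled by the crude estimate $(q/2)^{1/(2-q)}\ge (q/2)^{1/(1+\sqrt{2}/e)}$ together with the entropy bound $-y\ln y\le 1/e$, and it is precisely this step that forces the restriction $q\le 1-\sqrt{2}/e$. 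You instead first factor $p_q=e^{L}(2-q)/2$, which lets the Taylor bound hit both terms at once and reduces everything to the single scalar inequality $q[(\ln(q/2)-1)^{2}+1]\le 4$; the observation that its derivative collapses to $(\ln(q/2))^{2}$ then gives the result on the whole interval $(0,2)$ with no numerical cutoff at all. So your approach is both shorter and strictly stronger than the paper's. One small slip in your write-up: the Taylor-expansion step is not ``already used in the proof of Lemma~\ref{lem:new}'' --- that proof is a convexity argument with no Taylor expansion --- but this is a cosmetic remark and does not affect the mathematics.
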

\begin{proof}
Since $e^x<1+x+x^2/2$ for $x< 0$, we have
\[
p_q=\left(q/2\right)^{{\frac {q}{2-q}}}-\left(q/2\right)^{{\frac
{2}{2-q}}}=e^{ \frac {q\ln(q/2)}{2-q} }-\left(q/2\right)^{{\frac
{2}{2-q}}}<1+ \frac {q\ln(q/2)}{2-q} + \frac
{q^2\ln^2(q/2)}{2(2-q)^2}-\left(q/2\right)^{{\frac {2}{2-q}}}.
\]
To prove the lemma, it is sufficient to show that
\begin{equation}
\frac {-q\ln(q/2)}{\sqrt{2}(2-q)}\le \left(q/2\right)^{{\frac
{1}{2-q}}},~\forall q\in(0,1-\sqrt{2}/e].\label{lemma4:1}
\end{equation}
Since $\frac {1}{2-q}\le\frac {1}{1+\sqrt{2}/e }$ and
$\left(q/2\right)^{{\frac {1}{2-q}}}\ge \left(q/2\right)^{{\frac
{1}{1+\sqrt{2}/e}}}$, the inequality (\ref{lemma4:1}) can be
confirmed by verifying
\[
\sqrt{2}\frac {-(q/2)\ln(q/2)}{(1+\sqrt{2}/e)}\le
\left(q/2\right)^{{\frac {1}{1+\sqrt{2}/e}}},~\forall
q\in(0,1-\sqrt{2}/e],
\]
or equivalently, by verifying
\[
 -(q/2)^{\frac {\sqrt{2}/e}{1+\sqrt{2}/e}}\ln(q/2) \le
\frac{1+\sqrt{2}/e}{\sqrt{2}},~\forall q\in(0,1-\sqrt{2}/e].
\]

Let $y=(q/2)^{\frac {\sqrt{2}/e}{1+\sqrt{2}/e}}\in(0,1)$. We then
have the desired result by
\[
 -(q/2)^{\frac {\sqrt{2}/e}{1+\sqrt{2}/e}}\ln(q/2)=-\frac {1+\sqrt{2}/e }{\sqrt{2}/e
 }y\ln(y)\le  \frac{1+\sqrt{2}/e}{\sqrt{2}}
\]
because the negative entropy function $-y\ln(y)$ attains the maximum
value of $1/e$.
\end{proof}

\section{Main Results}

Let Null($\Phi$) be the null space of $\Phi$; $x^*,x^{(q)}$ be the
solutions to (\ref{sp}) and (\ref{lq}), respectively; $T_0$ be
defined in (\ref{T0}). Suppose $\|x^*\|_0=k$ and define
\[
T_0^c=\{1,2,\cdots,n\} \setminus T_0.
\]
The following null space property is essential. However, a refined
version is immediately stated in Lemma \ref{lem:03}.
\begin{lemma}[\cite{G03}]\label{lem:02}
$x^{(q)}$ is the unique sparse solution  $x^*$ if and only if
\begin{equation}
\|h_{T_0}\|_q<\|h_{T_0^c}\|_q,~ \forall  h\in {\rm Null}(\Phi),~
h\neq 0, \label{h:1}
\end{equation}
where $h_{T_0}$ and $h_{T_0^c}$ are similarly defined as in
(\ref{xT0}).
\end{lemma}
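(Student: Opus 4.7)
The plan is to establish both implications using only the subadditivity of the $q$-quasi-norm. Recall that for $q\in(0,1]$ and real $a,b$, the concavity of $t\mapsto t^q$ on $[0,\infty)$ gives $|a+b|^q \le |a|^q + |b|^q$, and substituting $a\leftarrow a+b$, $b\leftarrow -b$ yields the reverse form $|a+b|^q \ge |a|^q - |b|^q$. This is the only nontrivial algebraic ingredient I will need.

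For the sufficient direction, I would take any competing feasible vector $\tilde x\ne x^*$, set $h:=\tilde x-x^*\in {\rm Null}(\Phi)\setminus\{0\}$, and use $x^*_{T_0^c}=0$ to split
\[
\|\tilde x\|_q^q=\sum_{i\in T_0}|x^*_i+h_i|^q+\sum_{i\in T_0^c}|h_i|^q.
\]
Applying the reverse inequality on $T_0$ I would then obtain
\[
\|\tilde x\|_q^q\ge \|x^*\|_q^q-\|h_{T_0}\|_q^q+\|h_{T_0^c}\|_q^q>\|x^*\|_q^q,
\]
where the final strict inequality is precisely the hypothesis (\ref{h:1}). Hence $x^*$ strictly beats every competitor, so $x^{(q)}=x^*$ is the unique $\ell_q$ minimizer.

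For the converse I would argue by contrapositive. Assume (\ref{h:1}) fails, so some $0\ne h\in {\rm Null}(\Phi)$ satisfies $\|h_{T_0}\|_q\ge \|h_{T_0^c}\|_q$. I would then exhibit a $k$-sparse vector supported in $T_0$ whose $\ell_q$ recovery is not unique: take $u:=-h_{T_0}$ (so $\|u\|_0\le k$ and ${\rm supp}(u)\subseteq T_0$), let $b:=\Phi u$, and observe that both $u$ and $u+h=h_{T_0^c}$ are feasible for $\Phi x=b$, with
\[
\|u+h\|_q^q=\|h_{T_0^c}\|_q^q\le\|h_{T_0}\|_q^q=\|u\|_q^q.
\]
Thus $u$ fails to be a unique $\ell_q$ minimizer, contradicting the statement under its usual ``for every $k$-sparse $x^*$ supported in $T_0$'' reading from \cite{G03}.

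I do not anticipate a substantive obstacle; the argument is a direct quasi-norm analogue of the classical $\ell_1$ null-space proof. The one subtle point worth watching is that the reverse inequality $|a+b|^q\ge |a|^q-|b|^q$ is not strict in general, so the strictness in the sufficient direction must be inherited from the strict ``$<$'' in (\ref{h:1})---this is exactly what forces the null-space condition to be strict rather than weak, and why the two directions balance.
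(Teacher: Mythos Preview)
The paper does not prove Lemma~\ref{lem:02}; it is quoted from \cite{G03} and used as a black box (the paper's own contribution is the refinement in Lemma~\ref{lem:03}, whose short proof only addresses why the case $h_{T_0^c}=0$ can be excluded). So there is no ``paper's proof'' to compare against; your proposal is a self-contained argument for a cited result.

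Your sufficient direction is the standard null-space-property computation and is correct. For the converse you correctly flag the interpretive issue: as literally written in the paper, $x^*$ and $T_0$ are fixed, and failure of (\ref{h:1}) for some $h$ does not by itself produce a competitor beating \emph{that particular} $x^*$. Your construction $u=-h_{T_0}$ instead shows that \emph{some} $k$-sparse vector supported in $T_0$ fails unique recovery, which is exactly the uniform-over-$T_0$ formulation in \cite{G03}. That reading is the intended one and your argument for it is fine; just be aware that the necessity direction genuinely requires this uniform interpretation and would not follow for a single fixed $x^*$.
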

\begin{lemma}\label{lem:03}
$x^{(q)}$ is the unique sparse solution  $x^*$ if and only if
\begin{equation}
\|h_{T_0}\|_q<\|h_{T_0^c}\|_q,~ \forall  h\in {\rm
Null}(\Phi),~h_{T_0^c}\neq 0.\label{h:2}
\end{equation}
\end{lemma}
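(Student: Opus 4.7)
The plan is to reduce Lemma \ref{lem:03} to Lemma \ref{lem:02} by showing that, given the standing assumption $\|x^*\|_0=k$ on the sparse solution of (\ref{sp}), the null-space conditions (\ref{h:1}) and (\ref{h:2}) are in fact equivalent. One direction is immediate, since (\ref{h:2}) is just (\ref{h:1}) restricted to those $h\in\mathrm{Null}(\Phi)$ that additionally satisfy $h_{T_0^c}\neq 0$; so the entire content of Lemma \ref{lem:03} is the converse implication.

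For that converse $(\ref{h:2})\Rightarrow(\ref{h:1})$, the single fact I would need is that $\mathrm{Null}(\Phi)$ contains no nonzero vector supported on $T_0$. Once this is granted, every nonzero $h\in\mathrm{Null}(\Phi)$ automatically has $h_{T_0^c}\neq 0$, so (\ref{h:2}) yields the strict inequality $\|h_{T_0}\|_q<\|h_{T_0^c}\|_q$ for every such $h$, which is exactly (\ref{h:1}). I would establish the auxiliary claim by contradiction: suppose $h\in\mathrm{Null}(\Phi)$, $h\neq 0$, with $\mathrm{supp}(h)\subseteq T_0$. Since $h\neq 0$ is supported on $T_0$, pick an index $i\in T_0$ with $h_i\neq 0$ and set $t=-x^*_i/h_i$; this is well defined because $x^*_i\neq 0$ for every $i\in T_0$ by definition of $T_0$. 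The vector $\tilde x=x^*+th$ then satisfies $\Phi\tilde x=b$ (as $\Phi h=0$), $\tilde x_j=0$ for all $j\in T_0^c$ (since both $x^*$ and $h$ vanish outside $T_0$), and $\tilde x_i=0$ by construction. Hence $\|\tilde x\|_0\le k-1$, contradicting the minimality $\|x^*\|_0=k$ in problem (\ref{sp}).

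Combining the two implications gives the equivalence of (\ref{h:1}) and (\ref{h:2}), and then invoking Lemma \ref{lem:02} finishes the proof. The argument is almost entirely structural, so I do not anticipate any serious obstacle; the one conceptual ingredient is the kernel observation for the submatrix $\Phi_{T_0}$, which itself is a clean consequence of the fact that a sparsest $x^*$ cannot admit any support-reducing perturbation inside $T_0$.
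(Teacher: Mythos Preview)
Your proof is correct and follows essentially the same route as the paper's: both reduce to Lemma \ref{lem:02} by arguing that no nonzero $h\in\mathrm{Null}(\Phi)$ can have $h_{T_0^c}=0$, since otherwise $x^*+th$ for a suitable $t$ would be a feasible vector of strictly smaller support, contradicting the optimality of $x^*$. The only difference is cosmetic---you explicitly exhibit $t=-x^*_i/h_i$, whereas the paper simply asserts $\min_{t\in\mathbb{R}}\|x^*+th_{T_0}\|_0\le k-1$.
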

\begin{proof}
It is sufficient to study the difference between (\ref{h:1}) and
(\ref{h:2}). Suppose $0\neq h\in {\rm Null}(\Phi)$ and $h_{T_0^c}=
0$.  It follows that $h_{T_0 }\neq  0$ and $\Phi h_{T_0}=0$.
Therefore,
\[
\Phi (x^*+th_{T_0})=b,~\forall t\in \mathbb{R},
\]
and
\[
\min_{t\in \mathbb{R}} \|x^*+th_{T_0}\|_0 \le k-1,
\]
which contradicts the optimality of $x^*$.
\end{proof}

The purpose of this research is to establish sufficient conditions
for (\ref{h:2}) with the help of Lemmas \ref{lem:new} and
\ref{lem:a1} so that
\[
\tau(h,q):=\frac{\|h_{T_0}\|_q}{\|h_{T_0^c}\|_q}<1,~\forall h\in
{\rm Null}(\Phi),~h_{T_0^c}\neq 0.
\]
To this end, let
\[
h= h_{T_0}+h_{T_1}+h_{T_2}+\cdots ,
\]
where $T_1$ corresponds to the locations of the $k$ largest entries
of $h_{T_0^c}$, $T_2$ the locations of the next $k$ largest entries
of $h_{T_0^c}$ and so on. Without loss of generality, we assume
\[
h=(h_{T_0},h_{T_1},h_{T_2},\cdots)^T
\]
with the cardinality of $T_i$ being equal to $k$ for
$i=0,1,2,\cdots$. Define a ratio
\[
t:=t(h,q)\in(0,1]
\]
such that
\[
\|h_{T_1}\|_q^q=t\sum_{i\ge 1}\|h_{T_i}\|_q^q.
\]
According to Lemma \ref{lem:03}, we only focus on nonzero
$h_{T_0^c}$, which immediately implies that $h_{T_1}\neq 0$, i.e.,
$t>0$. Several technique lemmas are needed.

\begin{lemma}[\cite{Lai11}] \label{lem:3}
For $q\in(0,1)$, we have
\begin{equation*}
\sum_{i\ge 2} \|h_{T_i}\|_2^2\le \frac{1}{k^{(2-q)/q}}(1-t)t^{(2-q)/q}
\left(\sum_{i\ge 1}\|h_{T_i}\|_q^q \right)^{2/q}
\end{equation*}
\end{lemma}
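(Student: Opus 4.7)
The plan is to exploit the sorting property of the blocks $T_1, T_2,\ldots$ to bound each tail block in $\ell_2$ by its $\ell_q$ norm multiplied by a ``tail height'' inherited from $T_1$. Concretely, for any $i\ge 2$ and any index $j\in T_i$, the construction ensures $|h_j|\le |h_\ell|$ for every $\ell\in T_1$. Hence $|h_j|^q$ is at most the minimum of $|h_\ell|^q$ over $\ell\in T_1$, and the minimum is no larger than the average. This gives the pointwise bound
\[
|h_j|^{2-q}\le\Bigl(\tfrac{\|h_{T_1}\|_q^q}{k}\Bigr)^{(2-q)/q},\qquad j\in T_i,\ i\ge 2.
\]

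With that in hand, I would split $|h_j|^2=|h_j|^q\cdot|h_j|^{2-q}$ and sum over $j\in T_i$:
\[
\|h_{T_i}\|_2^2\le\Bigl(\tfrac{\|h_{T_1}\|_q^q}{k}\Bigr)^{(2-q)/q}\|h_{T_i}\|_q^q.
\]
Summing this inequality over all $i\ge 2$ factors the tail height out of the sum, leaving $\sum_{i\ge 2}\|h_{T_i}\|_q^q$. By the definition of $t$, this sum equals $(1-t)S$ where $S=\sum_{i\ge 1}\|h_{T_i}\|_q^q$, while $\|h_{T_1}\|_q^q=tS$.

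Substituting these identities and collecting powers of $k$, $t$, $S$ yields
\[
\sum_{i\ge 2}\|h_{T_i}\|_2^2\le\frac{1}{k^{(2-q)/q}}\,(1-t)\,t^{(2-q)/q}\,S^{2/q},
\]
which is exactly the desired inequality. The only nontrivial step is the pointwise bound via the ``min $\le$ average'' trick applied to $T_1$ rather than $T_{i-1}$; this is essential because it is the block $T_1$ whose $\ell_q$ mass carries the factor $t$, so that the clean factorization $(1-t)\,t^{(2-q)/q}$ falls out. Everything else is algebraic bookkeeping, so I do not anticipate a real obstacle.
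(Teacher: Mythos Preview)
Your argument is correct. The paper does not supply its own proof of this lemma; it is quoted from \cite{Lai11}, so there is no in-paper proof to compare against. Your derivation---bounding every tail coordinate by the \emph{average} $q$-th power over $T_1$ (rather than over the immediately preceding block $T_{i-1}$, as in the classical Cand\`es--Tao shifting argument), then splitting $|h_j|^2=|h_j|^q|h_j|^{2-q}$ and summing---is exactly what is needed to produce the factor $(1-t)\,t^{(2-q)/q}$ after substituting $\|h_{T_1}\|_q^q=tS$ and $\sum_{i\ge 2}\|h_{T_i}\|_q^q=(1-t)S$. The exponent bookkeeping $S^{(2-q)/q}\cdot S=S^{2/q}$ is fine, and the sorting hypothesis on the blocks justifies the pointwise bound. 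No gaps.
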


\begin{lemma} \label{lem:4}
For $q\in(0,1),~ h\in {\rm Null}(\Phi),~h_{T_0^c}\neq 0$, we have
\begin{equation*}
\sum_{i\ge 2} \|h_{T_i}\|_2 \le \frac{1+(p_q-1)t^{1/q}}{k^{1/q-1/2}}
\left(\sum_{i\ge 1}\|h_{T_i}\|_q^q \right)^{1/q}
\end{equation*}
\end{lemma}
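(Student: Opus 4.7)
The plan is to apply Lemma \ref{lem:new} block by block to each $h_{T_i}$ for $i \ge 2$, sum the resulting inequalities, and then carefully package the max-min tail into a single term controlled by $\|h_{T_1}\|_q$. For each $i \ge 2$, since $|T_i| = k$, Lemma \ref{lem:new} yields
\[
\|h_{T_i}\|_2 \le \frac{\|h_{T_i}\|_q}{k^{1/q-1/2}} + p_q\sqrt{k}\bigl(\max_{j\in T_i}|h_j| - \min_{j\in T_i}|h_j|\bigr).
\]
Summing over $i\ge 2$ then produces two contributions that I will treat separately.

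For the max-min contribution, I would exploit the sorting: because the $T_i$ are chosen as successive blocks of $k$ largest entries, $\max_{T_{i+1}}|h_j| \le \min_{T_i}|h_j|$, so the sum telescopes and gives
\[
\sum_{i\ge 2}\bigl(\max_{T_i}|h_j|-\min_{T_i}|h_j|\bigr) \le \max_{j\in T_2}|h_j| \le \min_{j\in T_1}|h_j| \le \frac{\|h_{T_1}\|_q}{k^{1/q}},
\]
where the last step uses that $k(\min_{T_1}|h_j|)^q \le \|h_{T_1}\|_q^q$. Combined with $\sqrt{k}/k^{1/q} = k^{-(1/q-1/2)}$, this bounds the tail term by $p_q\,\|h_{T_1}\|_q / k^{1/q-1/2}$.

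For the $\ell_q$ contribution $\sum_{i\ge 2}\|h_{T_i}\|_q$, I would use the elementary inequality $\sum_i a_i \le (\sum_i a_i^q)^{1/q}$ valid for $a_i\ge 0$ and $q\in(0,1]$. Writing $S:=\bigl(\sum_{i\ge 1}\|h_{T_i}\|_q^q\bigr)^{1/q}$, the definition of $t$ gives $\|h_{T_1}\|_q = t^{1/q}S$ and $\sum_{i\ge 2}\|h_{T_i}\|_q^q = (1-t)\,S^q$, hence $\sum_{i\ge 2}\|h_{T_i}\|_q \le (1-t)^{1/q}S$. Assembling both pieces,
\[
\sum_{i\ge 2}\|h_{T_i}\|_2 \le \frac{(1-t)^{1/q} + p_q\,t^{1/q}}{k^{1/q-1/2}}\,S.
\]

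The last step, which is the only nontrivial point, is to replace the numerator by $1 + (p_q-1)t^{1/q}$. This reduces to the scalar inequality $(1-t)^{1/q} \le 1 - t^{1/q}$ for $t\in[0,1]$, which in turn is equivalent to $(1-t)^p + t^p \le 1$ for $p=1/q\ge 1$. I would justify this by noting that $f(t)=(1-t)^p+t^p$ is convex on $[0,1]$ with $f(0)=f(1)=1$, so $f(t)\le 1$ throughout. Combining everything then yields the claimed bound, and the proof is complete.
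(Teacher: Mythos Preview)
Your proof is correct and follows essentially the same approach as the paper: apply Lemma~\ref{lem:new} blockwise for $i\ge 2$, telescope the max--min terms using the ordering of the blocks, bound the remaining term via $\min_{T_1}|h_j|\le \|h_{T_1}\|_q/k^{1/q}$, and then use the $\ell_q$-superadditivity inequality $\sum a_i\le(\sum a_i^q)^{1/q}$. The only difference is bookkeeping at the very end: the paper writes $\sum_{i\ge 2}\|h_{T_i}\|_q+p_q\|h_{T_1}\|_q=\sum_{i\ge 1}\|h_{T_i}\|_q+(p_q-1)\|h_{T_1}\|_q$ and applies the superadditivity to the full sum $\sum_{i\ge 1}$, which lands directly on $1+(p_q-1)t^{1/q}$ without any further work; you instead apply it to $\sum_{i\ge 2}$, obtain the (sharper) numerator $(1-t)^{1/q}+p_q t^{1/q}$, and then need the extra scalar inequality $(1-t)^{1/q}\le 1-t^{1/q}$ to match the stated bound. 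Your convexity argument for that last step is fine, so nothing is missing---the paper's rearrangement is just a slightly slicker way to the same endpoint.
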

\begin{proof}
We first apply Lemma \ref{lem:new} to each $h_{T_i}$ to get
\begin{equation}
k^{1/q-1/2}\|h_{T_i}\|_2\le \|h_{T_i}\|_q+p_qk^{1/q}
(|h_{ik+1}|-|h_{ik+k}|),~~i=2,3,\ldots \label{Lemma-008}
\end{equation}
and sum up (\ref{Lemma-008}) over all $i\ge 2$. Then, 
\begin{eqnarray*}
k^{1/q-1/2}\sum_{i\ge 2}\|h_{T_i}\|_2&\le& \sum_{i\ge 2}\|h_{T_i}\|_q
+ p_qk^{1/q}\left\{|h_{2k+1}|-(|h_{2k+k}|-|h_{3k+1}|)-\cdots\right\}\\
&\le& \sum_{i\ge 2}\|h_{T_i}\|_q +
p_qk^{1/q} |h_{2k+1}|\\
&=& \sum_{i\ge 2}\|h_{T_i}\|_q +
p_qk^{1/q} (|h_{2k+1}|^q)^{1/q}\\
&\le& \sum_{i\ge 2}\|h_{T_i}\|_q +
p_qk^{1/q} (\|h_{T_1}\|^q_q/k)^{1/q}\\
&=& \sum_{i\ge 1}\|h_{T_i}\|_q+(p_q-1)\left( \|h_{T_1}\|_q^q\right)^{1/q} \\
&\le& \left(\sum_{i\ge 1}\|h_{T_i}\|_q^q\right)^{1/q}+(p_q-1)
\left( t\sum_{i\ge1}\|h_{T_i}\|_q^q\right)^{1/q}~~(\hbox{since~}q<1)\\
&=&(1+(p_q-1)t^{1/q})\left(\sum_{i\ge 1}\|h_{T_i}\|_q^q\right)^{1/q}.
\end{eqnarray*}
\end{proof}

\begin{lemma} \label{lem:6}
For $q\in(0,1),~ h\in {\rm Null}(\Phi),~h_{T_0^c}\neq 0$, we have
\begin{equation*}
\|\Phi(h_{T_0}+h_{T_1})\|_2^2=\|\Phi(\sum_{j\ge2}h_{T_j})\|_2^2 \le
\left(\frac{(1-t)t^{(2-q)/q}}{k^{2/q-1}}+\frac{\delta_{2k}(1+(p_q-1)t^{1/q})^2}{k^{2/q-1}}\right)
\left(\sum_{i\ge 1}\|h_{T_i}\|_q^q\right)^{2/q}.
\end{equation*}
\end{lemma}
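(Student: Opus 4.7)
The first equality is immediate: since $h\in{\rm Null}(\Phi)$ we have $\Phi(h_{T_0}+h_{T_1})=-\Phi(\sum_{j\ge 2}h_{T_j})$, so the two squared norms coincide. The substantive task is to bound $\|\Phi(\sum_{j\ge 2}h_{T_j})\|_2^2$ from above by the expression on the right. My plan is to expand this norm via bilinearity, apply the two standard RIP consequences, and finally invoke Lemmas \ref{lem:3} and \ref{lem:4}.

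First I would write
\begin{equation*}
\Bigl\|\Phi\Bigl(\sum_{j\ge 2}h_{T_j}\Bigr)\Bigr\|_2^2
=\sum_{j\ge 2}\|\Phi h_{T_j}\|_2^2
+2\sum_{2\le j<l}\langle \Phi h_{T_j},\Phi h_{T_l}\rangle.
\end{equation*}
Each $h_{T_j}$ is $k$-sparse, so the RIP gives $\|\Phi h_{T_j}\|_2^2\le(1+\delta_{2k})\|h_{T_j}\|_2^2$. For the cross terms, $h_{T_j}$ and $h_{T_l}$ have disjoint supports of total size $2k$, and the well-known polarization argument applied to RIP yields $|\langle \Phi h_{T_j},\Phi h_{T_l}\rangle|\le\delta_{2k}\|h_{T_j}\|_2\|h_{T_l}\|_2$. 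Substituting both inequalities and regrouping the $\delta_{2k}$ terms produces the clean identity
\begin{equation*}
\Bigl\|\Phi\Bigl(\sum_{j\ge 2}h_{T_j}\Bigr)\Bigr\|_2^2
\le \sum_{j\ge 2}\|h_{T_j}\|_2^2
+\delta_{2k}\Bigl(\sum_{j\ge 2}\|h_{T_j}\|_2\Bigr)^2,
\end{equation*}
which is the algebraic heart of the argument: the $(1+\delta_{2k})$ diagonal terms together with $2\delta_{2k}$ off-diagonal terms assemble into one plain $\ell_2^2$-sum and one $\delta_{2k}$ times a squared $\ell_1$-type sum.

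Finally I would plug in the two preceding lemmas. Lemma \ref{lem:3} controls $\sum_{j\ge 2}\|h_{T_j}\|_2^2$ by $(1-t)t^{(2-q)/q}k^{1-2/q}(\sum_{i\ge 1}\|h_{T_i}\|_q^q)^{2/q}$, while squaring Lemma \ref{lem:4} controls $(\sum_{j\ge 2}\|h_{T_j}\|_2)^2$ by $(1+(p_q-1)t^{1/q})^2 k^{1-2/q}(\sum_{i\ge 1}\|h_{T_i}\|_q^q)^{2/q}$. Adding the two estimates with the correct coefficients ($1$ and $\delta_{2k}$) yields precisely the claimed bound. The only real subtlety is the regrouping step---making sure that combining the diagonal RIP upper bound $(1+\delta_{2k})\|h_{T_j}\|_2^2$ with twice the off-diagonal bound collapses to $\sum\|h_{T_j}\|_2^2+\delta_{2k}(\sum\|h_{T_j}\|_2)^2$---since this is what lets Lemmas \ref{lem:3} and \ref{lem:4} be applied separately to the two resulting pieces.
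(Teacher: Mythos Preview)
Your proposal is correct and follows essentially the same route as the paper: expand the squared norm, apply the RIP diagonal bound and the off-diagonal inner-product bound $|\langle \Phi h_{T_j},\Phi h_{T_l}\rangle|\le\delta_{2k}\|h_{T_j}\|_2\|h_{T_l}\|_2$, regroup to obtain $\sum\|h_{T_j}\|_2^2+\delta_{2k}(\sum\|h_{T_j}\|_2)^2$, and then invoke Lemmas~\ref{lem:3} and~\ref{lem:4}. The only cosmetic difference is that the paper uses $(1+\delta_k)$ rather than $(1+\delta_{2k})$ on the diagonal before upper-bounding, and you make the null-space justification for the first equality explicit, which the paper leaves implicit.
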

\begin{proof}
According to Lemma 1.2 in \cite{C05}, we have
\[
\langle\Phi(h_{T_i}),\Phi(h_{T_j})\rangle \le \delta_{2k}
\|h_{T_i}\|_2\|h_{T_j}\|_2.
\]
Therefore,
\begin{eqnarray*}
\|\Phi(\sum_{j\ge2}h_{T_j})\|_2^2&=&\sum_{i,j\ge2}\langle\Phi(h_{T_i}),\Phi(h_{T_j})\rangle \\
&=&\sum_{j\ge2}\langle\Phi(h_{T_j}),\Phi(h_{T_j})\rangle+2\sum_{2\le i<j}\langle\Phi(h_{T_i}),\Phi(h_{T_j})\rangle\\
&\le &(1+\delta_k)\sum_{i\ge 2}\|h_{T_i}\|_2^2+2\delta_{2k}\sum_{i>j\ge 2}\|h_{T_i}\|_2\|h_{T_j}\|_2\\
&\le & \sum_{i\ge 2}\|h_{T_i}\|_2^2+ \delta_{2k}(\sum_{i \ge 2}\|h_{T_i}\|_2)^2\\
&\le&
\left(\frac{(1-t)t^{(2-q)/q}}{k^{2/q-1}}+\frac{\delta_{2k}(1+(p_q-1)t^{1/q})^2}{k^{2/q-1}}\right)
\left(\sum_{i\ge 1}\|h_{T_i}\|_q^q\right)^{2/q}.
\end{eqnarray*}
where the last inequality follows from Lemma \ref{lem:3} and Lemma
\ref{lem:4}.
\end{proof}

\begin{lemma}[\cite{Lai11}]\label{lem:5}
For $q\in(0,1)$, we have
\begin{equation}
\|\Phi(h_{T_0}+h_{T_1})\|_2^2\ge \frac{ 1-\delta_{2k}}{k^{2/q-1}}(\tau(h,q)^2+t^{2/q})\left(\sum_{i\ge 1}\|h_{T_i}\|_q^q\right)^{2/q}.
\end{equation}
\end{lemma}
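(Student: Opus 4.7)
The plan is to chain three standard ingredients: the defining inequality of the RIC applied to a $2k$-sparse vector, a Pythagorean split coming from disjointness of supports, and the H\"older-type inequality (\ref{quasi-norm:1}) applied separately to each $k$-sparse block. No new machinery is required beyond what is already in the paper, so this lemma serves as the ``trivial'' lower bound that will be paired with the much more delicate upper bound of Lemma \ref{lem:6}.

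First I would observe that $h_{T_0}+h_{T_1}$ is supported on $T_0\cup T_1$, a set of cardinality at most $2k$. The definition of $\delta_{2k}$ then gives
$$\|\Phi(h_{T_0}+h_{T_1})\|_2^2 \ge (1-\delta_{2k})\,\|h_{T_0}+h_{T_1}\|_2^2.$$
Since $T_0\cap T_1=\emptyset$, the squared $\ell_2$ norm splits coordinate-wise as $\|h_{T_0}+h_{T_1}\|_2^2=\|h_{T_0}\|_2^2+\|h_{T_1}\|_2^2$. Applying (\ref{quasi-norm:1}) with $n$ replaced by $k$ to each of the $k$-sparse pieces $h_{T_0}$ and $h_{T_1}$ yields
$$\|h_{T_j}\|_2^2 \ge \frac{\|h_{T_j}\|_q^2}{k^{2/q-1}}, \qquad j=0,1.$$

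Finally, I would translate the $\ell_q$ quantities into the ratios that appear on the right-hand side of the claim. Since $T_0^c=\bigcup_{i\ge 1}T_i$ is a disjoint union, $\|h_{T_0^c}\|_q^q=\sum_{i\ge 1}\|h_{T_i}\|_q^q$; combined with the definitions of $\tau(h,q)$ and $t$ this gives
$$\|h_{T_0}\|_q = \tau(h,q)\Bigl(\sum_{i\ge 1}\|h_{T_i}\|_q^q\Bigr)^{1/q}, \qquad \|h_{T_1}\|_q = t^{1/q}\Bigl(\sum_{i\ge 1}\|h_{T_i}\|_q^q\Bigr)^{1/q}.$$
Squaring, summing, and chaining back through the two previous displays immediately produces the stated lower bound.

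There is no real obstacle here; the argument is essentially bookkeeping, and every step is an equality or a textbook inequality. The only point that merits a line of comment is that the hypothesis $h_{T_0^c}\ne 0$ guarantees $\sum_{i\ge 1}\|h_{T_i}\|_q^q>0$, so that $\tau(h,q)$ and $t=t(h,q)$ are well-defined; this is exactly the reason Lemma \ref{lem:03} was introduced earlier to replace Lemma \ref{lem:02}.
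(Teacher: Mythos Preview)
Your argument is correct and complete; it is precisely the standard derivation one expects for this lemma. The paper itself does not prove Lemma~\ref{lem:5} at all---it simply quotes the result from \cite{Lai11}---so there is no in-paper proof to compare against, but the three-step chain you outline (RIC lower bound on the $2k$-sparse vector $h_{T_0}+h_{T_1}$, Pythagorean split by disjointness of $T_0$ and $T_1$, then the H\"older inequality (\ref{quasi-norm:1}) on each $k$-block followed by substitution of the definitions of $\tau$ and $t$) is exactly how the cited source establishes it.
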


Combining Lemma \ref{lem:6} with Lemma \ref{lem:5}, we obtain
\[
(1-\delta_{2k}) (\tau(h,q)^2+t^{2/q})\le (1-t)t^{(2-q)/q}  +
\delta_{2k}(1+(p_q-1)t^{1/q})^2.
\]
After rearrangement of the terms, it implies that
\begin{equation}
\tau(h,q)^2\le (\delta_{2k}(1+(p_q-1)t^{1/q})^2
+t^{(2-q)/q}-(2-\delta_{2k})t^{2/q})/(1-\delta_{2k}). \label{tau:1}
\end{equation}
Then, an immediate sufficient condition for $x^q$ being the sparse
solution of (\ref{sp}) is to require the right hand side of
(\ref{tau:1}) being less than 1 for all $t\in(0,1]$. If we focus on
$\delta_{2k}<1$, the sufficient condition that the right hand side
of (\ref{tau:1}) being less than 1 can be expressed as
\begin{equation}
r(t,q,\delta_{2k}):=2\delta_{2k}
+(p_q-1)t^{1/q}(2+(p_q-1)t^{1/q})\delta_{2k}
+t^{(2-q)/q}-(2-\delta_{2k})t^{2/q} <1,~\forall t\in (0,1].
\label{tau:2}
\end{equation}
The problem then becomes to estimate the range of $q\in(0,1)$ and
$\delta_{2k}\in(0,1)$ for which (\ref{tau:2}) is true.

Notice that by setting $p_q=1,~\forall q\in(0,1)$ and by the fact
(shown in \cite{Lai11}) that when $q\rightarrow 0^+$,
\[
\sup_{t\in(0,1]}
\left\{t^{(2-q)/q}-(2-\delta_{2k})t^{2/q}\right\}\le q/e \rightarrow
0,
\]
one can immediately obtain Theorem \ref{thm:11}. In general, as
$p_q<1$ for $q\in(0,1)$, we expect an improvement over Theorem
\ref{thm:11}.

To begin with, we rewrite
\begin{eqnarray*}
r(t,q,\delta_{2k})=r_1(t,q,\delta_{2k})+r_2(t,q,\delta_{2k})+r_{3}(t,q,\delta_{2k}).
\end{eqnarray*}
where
\begin{eqnarray*}
r_1(t,q,\delta_{2k})&=&\delta_{2k} +\delta_{2k} (1-(1-p_q)t^{1/q})^2;\\
r_2(t,q,\delta_{2k})&=&t^{(2-q)/q};\\
r_3(t,q,\delta_{2k})&=& -(2-\delta_{2k})t^{2/q}
\end{eqnarray*}
with the ranges $\delta_{2k}, q\in(0,1)$ and $t\in(0,1]$.

We first discuss the monotonicity of functions $r,~r_1,~r_2,~r_3$.
Here is the summary:
\begin{enumerate}
\item[(a)] $r_{1}(t,q,\delta_{2k})$ and $r_{3}(t,q,\delta_{2k})$ are
decreasing functions of $t\in(0,1]$ whereas
$r_{2}(t,q,\delta_{2k})$ is increasing.
\item[(b)] $r_1(t,q,\delta_{2k})$ is decreasing in terms of $q$
since both $1-p_q$ and $t^{1/q}$ are increasing functions of
$q$.
\item[(c)] The sum of the latter two functions $(r_2+r_3)(t,q,\delta_{2k})$ is an increasing function of $q$
for $t\le\frac{1}{2-\delta_{2k}}$ since
\[
\frac{\partial(r_2+r_3)(t,q,\delta_{2k})}{\partial q}=-2/q^2  (\ln t)
t^{2/q-1}(1-(2-\delta_{2k})t)\ge 0.
\]
\item[(d)] $r(t,q,\delta_{2k})$ is an increasing function
of $\delta_{2k}$ since we can rewrite (\ref{tau:2}) as
\begin{eqnarray*}
r(t,q,\delta_{2k}) &=& \delta_{2k}(2
+2(p_q-1)t^{1/q}+(p_q-1)^2t^{2/q}+t^{2/q})+t^{(2-q)/q}-2 t^{2/q}\\
&=&\delta_{2k}(1+(1+(p_q-1)t^{1/q})^2)+t^{(2-q)/q}-2 t^{2/q}
\end{eqnarray*}
with $1+(1+(p_q-1)t^{1/q})^2>0$.
\end{enumerate}

%
%
Moreover, we have
\begin{lemma}\label{rt:lm}
Suppose $\delta_{2k}\le 1/2$, it holds that
\begin{equation*}
r(t,q,\delta_{2k})<1,~\forall t\in(\frac{1}{2-\delta_{2k}},1],~\forall
q\in(0,1).
\end{equation*}
\end{lemma}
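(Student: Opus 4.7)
The plan is to split $r = r_1 + (r_2+r_3)$ and control each piece separately, noticing that the threshold $t=1/(2-\delta_{2k})$ in the statement is exactly the point at which a simple algebraic regrouping of $r_2+r_3$ changes sign. The proof should be short: no appeal to Lemma \ref{lem:a1} or to the monotonicity summary (a)--(d) is needed, so the main challenge is just to spot the right rearrangement rather than to grind through calculus.

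For the first piece, I would use the form already presented in the paper, $r_1(t,q,\delta_{2k}) = \delta_{2k}\bigl(1+(1-(1-p_q)t^{1/q})^2\bigr)$. From Lemma \ref{lem:new}, $p_q$ is decreasing in $q$ with $\lim_{q\to 0^+}p_q = 1$ and $\lim_{q\to 1^-}p_q = 1/4$, so $p_q \in (1/4,1)$ for $q\in(0,1)$, and in particular $1-p_q \in (0,3/4)$. Since $t\in(0,1]$ gives $t^{1/q}\in(0,1]$, this yields $(1-p_q)t^{1/q}\in(0,3/4)$, hence $1-(1-p_q)t^{1/q} \in (1/4,1)$, and therefore $(1-(1-p_q)t^{1/q})^2 < 1$. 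Consequently $r_1 \le 2\delta_{2k}$ throughout the stated region.

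For the second piece, the key step is to collect
\[
r_2(t,q,\delta_{2k})+r_3(t,q,\delta_{2k}) \;=\; t^{(2-q)/q} - (2-\delta_{2k})t^{2/q} \;=\; t^{2/q}\bigl(t^{-1}-(2-\delta_{2k})\bigr).
\]
For $t>1/(2-\delta_{2k})$ one has $t^{-1}<2-\delta_{2k}$, so the parenthesis is negative and $r_2+r_3<0$ strictly. Combining with the bound on $r_1$ gives $r(t,q,\delta_{2k}) < 2\delta_{2k} \le 1$, which is the claim. The assumption $\delta_{2k}\le 1/2$ enters only in this very last inequality; the rest of the argument is insensitive to the precise value of $\delta_{2k}$, and no obstacle of substance is expected.
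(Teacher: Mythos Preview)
Your proof is correct and is essentially the paper's own argument in slightly different packaging: the paper rewrites $r = 2\delta_{2k} + (p_q-1)t^{1/q}\bigl(2+(p_q-1)t^{1/q}\bigr)\delta_{2k} + (1/t - 2 + \delta_{2k})t^{2/q}$, observes that the last summand is negative for $t>1/(2-\delta_{2k})$ (your $r_2+r_3$ factorisation) and that the middle summand is negative since $p_q<1$ (equivalent to your bound $(1-(1-p_q)t^{1/q})^2<1$), yielding $r<2\delta_{2k}\le 1$. The only cosmetic difference is that you route the $r_1$ bound through the completed-square form rather than the product form, but the two are algebraically identical.
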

\begin{proof}
Suppose $t> \frac{1}{2-\delta_{2k}}$, we always have
\begin{eqnarray}
r(t,q,\delta_{2k})&=&2\delta_{2k} +(p_q-1)t^{1/q}(2+(p_q-1)t^{1/q})
\delta_{2k} +(1/t-2+\delta_{2k})t^{2/q}\nonumber\\
&\le&2\delta_{2k} +(p_q-1)t^{1/q}(2+(p_q-1)t^{1/q})\delta_{2k}\nonumber\\
&<& 2\delta_{2k}\le 1. \nonumber
\end{eqnarray}
\end{proof}

Now we present our main result.
\begin{theorem} \label{thm:2}
Suppose $\delta_{2k}\le 1/2$. For any $q\in(0,0.9181]$, each
minimizer $x^q$ of the $\ell_q$ minimization (\ref{lq}) is the
sparse solution of (\ref{sp}).
\end{theorem}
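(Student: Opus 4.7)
The plan is to prove the sufficient condition (\ref{tau:2}), namely $r(t, q, \delta_{2k}) < 1$ for all $t \in (0, 1]$; combined with (\ref{tau:1}) and Lemma \ref{lem:03}, this yields $x^{(q)} = x^*$. By monotonicity property (d), $r$ is increasing in $\delta_{2k}$, so it suffices to verify $r(t, q, 1/2) < 1$ for $t \in (0, 1]$ and $q \in (0, 0.9181]$. Lemma \ref{rt:lm} dispatches the range $t \in (2/3, 1]$, leaving $t \in (0, 2/3]$ to be treated.

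Next I would reduce to a single-variable inequality via a change of variables. Setting $s := t^{1/q} \in (0, 1]$ and $a := 1 - p_q \in (0, 3/4)$, a direct computation rewrites $r(t, q, 1/2) - 1$ as $s \cdot \psi_q(s)$, where
\[
\psi_q(s) \,:=\, s^{1-q} \,-\, \frac{3-a^2}{2}\, s \,-\, a.
\]
Thus $r < 1$ is equivalent to $\psi_q(s) < 0$. Since $1 - q \in (0,1)$, $\psi_q$ is concave, with unique critical point
\[
s^{*} \,=\, \left(\frac{2(1-q)}{3-a^2}\right)^{1/q},
\]
and the bound $a^2 < 1 \le 1 + 2q$ forces $s^{*} \le 1$. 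Hence the supremum of $\psi_q$ on $(0, 1]$ is $\psi_q(s^{*})$, and the remaining task reduces to verifying
\[
\phi(q) \,:=\, q\left(\frac{2(1-q)}{3-a^2}\right)^{(1-q)/q} \,<\, a, \qquad q \in (0, 0.9181].
\]

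To verify this single-variable inequality, I would split the $q$-range according to whether Lemma \ref{lem:a1} is available. For $q \in (0,\, 1 - \sqrt{2}/e\,]$, I would substitute the explicit lower bound $a = 1 - p_q > -q\ln(q/2)/(2-q)$ from Lemma \ref{lem:a1}, producing an inequality in $q$ alone that can be established by standard calculus. For $q \in (1 - \sqrt{2}/e,\, 0.9181]$, I would exploit the convex decreasing structure of $p_q$ from Lemma \ref{lem:new} together with sharper direct estimates on this shorter subinterval.

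The main obstacle is establishing $\phi(q) < a$ uniformly, because $0.9181$ is precisely the critical value at which equality $\phi(q_0) = a(q_0)$ holds, so the inequality is sharp and leaves little slack to work with --- this explains why the threshold takes such a specific numerical form. Furthermore, one cannot reduce the verification to a single extremal $q$-value: $r_1$ is decreasing in $q$ while $r_2 + r_3$ is increasing in $q$ for $t \le 2/3$, so the two pieces of $r$ compete in opposite directions, and a case split in $q$ such as the one above appears to be essential.
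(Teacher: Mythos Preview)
Your reduction is correct and genuinely different from the paper's argument. After the common step of fixing $\delta_{2k}=1/2$, the paper never solves the $t$-maximization exactly: it only establishes that $r(\,\cdot\,,q,1/2)$ is decreasing on an initial interval $(0,((1-p_q)/(2-q))^{q/(1-q)})$, uses Lemma~\ref{lem:a1} and a $69$-point mesh to show this threshold is itself monotone in $q$, and then disposes of the remaining two-variable rectangle $[0.0105,2/3]\times[0.17,0.9181]$ by a brute-force computer grid based on the corner-point bound~(\ref{mesh-est}). Your substitution $s=t^{1/q}$ reveals that $r-1=s\,\psi_q(s)$ with $\psi_q$ \emph{concave}, so the $t$-supremum can be computed in closed form at $s^*$; this collapses the two-variable verification to the single inequality $\phi(q)<1-p_q$. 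That is a real structural gain over the paper's approach.

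The gap in your proposal is the last step. You say the inequality $\phi(q)<a$ will follow ``by standard calculus'' on $(0,1-\sqrt{2}/e]$ after inserting the bound of Lemma~\ref{lem:a1}, and by ``sharper direct estimates'' on $(1-\sqrt{2}/e,\,0.9181]$, but neither is carried out, and neither is routine. Two concrete difficulties: first, $\phi$ depends on $a$ through the factor $3-a^2$ and is \emph{increasing} in $a$, so replacing $a$ by the Lemma~\ref{lem:a1} lower bound moves the two sides of $\phi(q)<a$ in opposite directions and does not directly yield a clean one-sided estimate; you would need to argue separately that $a\mapsto a-\phi(q,a)$ is monotone, or else bound $3-a^2$ independently. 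Second, as you yourself note, the inequality is saturated to four decimal places at $q=0.9181$ (the paper checks $r(0.064,0.9182,1/2)>1$), so any analytic argument on the upper subinterval must be extremely tight; ``direct estimates'' is not yet a plan. In practice you will likely still need a numerical sweep---but in one variable rather than two, which is the payoff of your approach.
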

\begin{proof}
Since $r(t,q,\delta_{2k})$ is an increasing function of
$\delta_{2k}$ by monotonicity (d), for any
$\delta'_{2k}<\delta_{2k}$, we have
\begin{equation*}
r(t,q,\delta_{2k})<1,~\forall t\in(0,1]
~\Longrightarrow~r(t,q,\delta'_{2k})<1,~\forall t\in(0,1].
\end{equation*}
Hence, it is sufficient to assume that $\delta_{2k}=1/2$.
Then we have
\[
  r(t,q,1/2) =1 +(p_q-1)t^{1/q}(2+(p_q-1)t^{1/q})/2
+t^{(2-q)/q}-3/2t^{2/q},
\]
which is less than 1, by Lemma \ref{rt:lm}, for all
$t\in(\frac{2}{3},1]$ and $q\in(0,1)$. The rest of the proof is to
show that $r(t,q,1/2)<1$ on $t\in(0,2/3]$ and $q\in(0,0.9181]$ by
incorporating monotonicity (a) - (c) on sufficiently fine meshes.

First, for any $q\in(0,1)$,
\begin{eqnarray*}
\frac{\partial r(t,q,1/2)}{\partial t}&=&\frac{t^{(2/q)-2}}{q}\left(
\frac{p_q-1}{t^{(1/q)-1}}+2-q-(3-(p_q-1)^2)t\right)\\
&\le&\frac{p_q-1}{q}t^{(1/q)-1}+\frac{2-q}{q}t^{(2/q)-2}\nonumber\\
&=&(p_q-1 +(2-q)t^{(1/q)-1})\frac{t^{(1/q)-1}}{q},\nonumber\\
&<&0,~~\forall ~0< t^{(1/q)-1}<\frac{1-p_q }{2-q}.
\end{eqnarray*}
In other words, $r(t,q,1/2)$ is strictly decreasing on $t\in(0,
(\frac{ 1-p_q }{2-q} )^{\frac{q}{1-q}})$. Moreover, $r(0,q,1/2)=1$.
Hence, for any $q\in(0,1)$, we have
\begin{equation}
r(t,q,1/2)<1, ~\forall~ 0<t< \left(\frac{ 1-p_q }{2
-q}\right)^{\frac{q}{1-q}},\label{eq:3}
\end{equation}
which specifies the range on which $r(t,q,1/2)<1$ by a function of
$q\in(0,1)$.

Secondly, to analyze the function $\left(\frac{ 1-p_q }{2
-q}\right)^{\frac{q}{1-q}}$, we can compute to get
\begin{equation*}
\lim\limits_{q\rightarrow0^+}\left(\frac{ 1-p_q }{2
-q}\right)^{\frac{q}{1-q}}=1;~\lim\limits_{q\rightarrow1^-}\left(\frac{ 1-p_q }{2
-q}\right)^{\frac{q}{1-q}}=0,
\end{equation*}
and its first derivative
\begin{eqnarray}
\frac{d}{dq}\left(\left(\frac{ 1-p_q }{2
-q}\right)^{\frac{q}{1-q}}\right)&=&\left(\frac{ 1-p_q }{2
-q}\right)^{\frac{q}{1-q}}\left( \frac{1}{(1-q)^2}\ln\frac{
  1-p_q }{2-q}+\frac{q(1-p_q-p_q'(2-q))}{ (1-q)(1-p_q)(2-q)}
  \right)\nonumber\\
  &=&\left(\frac{ 1-p_q }{2
-q}\right)^{\frac{q}{1-q}}\left( \frac{1}{(1-q)^2}\ln\frac{
  1-p_q }{2-q}+\frac{q(1-p_q-\frac{ 2\ln(q/2)}{(2-q)}p_q )}
  { (1-q)(1-p_q)(2-q)}
  \right) \label{eq:4}
\end{eqnarray}
where $(\ref{eq:2})$ is used for $p_q'$. We consider the following
two cases.
\begin{itemize}
\item[(i)]
Let $q\in(0,0.3]$. In this case, $p_q\ge p_{0.3} > 0.6081$.
According to Lemma \ref{lem:a1}, we have
\[
 \frac{-q\ln(q/2)}{(1-p_q)(2-q)}<1.
\]
Therefore,
\begin{eqnarray}
\frac{1}{(1-q)^2}\ln\frac{
  1-p_q }{2-q}+\frac{q(1-p_q-\frac{ 2\ln(q/2)}{(2-q)}p_q )}
  { (1-q)(1-p_q)(2-q)}
 & <&
\frac{1}{1-q}\left\{  \frac{\ln(\frac{
  1-p_q }{2-q})}{1-q}+
  \frac{q}{ 2-q}+
  \frac{ 2 p_q }{   2-q }\right\} \nonumber\\
  &\le&
  \frac{1}{1-q}\left\{\ln\big(\frac{
  1-0.6081 }{2-0.3}\big) +
  \frac{0.3}{ 2-0.3}+
  \frac{ 2 }{   2-0.3 }\right\}\nonumber\\
    &<&0,\nonumber
\end{eqnarray}
implying that $\left(\frac{ 1-p_q }{2
-q}\right)^{\frac{q}{1-q}}$ is strictly decreasing for
$q\in(0,0.3]$.
\item[(ii)]
Let $ q\in [0.3, 0.99].$ Define a partition of $q\in(0,0.3]$ by
$q_s=0.3+0.01s$ for $s=0,1,\cdots,69$ and try to estimate the
right hand side of (\ref{eq:4}) on each mesh $q\in[q_s,q_{s+1}]$
where $q_s, q_{s+1}$ are points in the partition. Then,
\begin{eqnarray}
 && \frac{\ln\frac{
  1-p_q }{2-q}}{(1-q)^2}+\frac{q(1-p_q-\frac{ 2\ln(q/2)}{(2-q)}p_q )}{
(1-q)(1-p_q)(2-q)}\nonumber\\
&=&
  \frac{\ln\frac{
  1-p_q }{2-q}}{(1-q)^2}+\frac{q }{
(1-q) (2-q)}+\frac{-2q\ln(q/2)  p_q }{
(1-q)(1-p_q)(2-q)^2}\nonumber\\
&\le&
  \frac{\ln\frac{
  1-p_{q_{s+1}} }{2-q_{s+1}}}{(1-q_{s})^2}+\frac{q_{s+1} }{
(1-q_{s+1}) (2-q_{s+1})}+\frac{-2q_{s+1}\ln(q_{s}/2)  p_{q_{s}} }{
(1-q_{s+1})(1-p_{q_{s}})(2-q_{s+1})^2}.\label{case ii}
\end{eqnarray}
With the aid of computer, the evaluation of (\ref{case ii})
shows that they are negative on all mesh points. That is,
$\left(\frac{ 1-p_q }{2 -q}\right)^{\frac{q}{1-q}}$ is strictly
decreasing for $q\in [0.3, 0.99]$.
\end{itemize}
Together with (i) and (ii), we conclude that $\left(\frac{ 1-p_q }{2
-q}\right)^{\frac{q}{1-q}}$ is strictly decreasing for $q\in (0,
0.99]$, which leads to the following estimation:
\begin{eqnarray}
\left(\frac{ 1-p_q }{2 -q}\right)^{\frac{q}{1-q}}\ge \left(\frac{
1-p_{0.9181}}{2
-0.9181}\right)^{\frac{0.9181}{1-0.9181}}>0.0105,&&\forall
q\in (0, 0.9181];\label{0.9181}\\
\left(\frac{ 1-p_q }{2 -q}\right)^{\frac{q}{1-q}}\ge \left(\frac{
1-p_{0.17} }{2 -0.17}\right)^{\frac{0.17}{1-0.17}}>
0.6821,&&\forall q\in(0,0.17].\label{0.17}
\end{eqnarray}
It follows from (\ref{0.9181}) and (\ref{eq:3}) that
\begin{eqnarray*}
r(t,q,1/2)<1,&&\forall  t\in (0, 0.0105], ~\forall q\in(0,0.9181];
\end{eqnarray*}
from (\ref{0.17}) and Lemma \ref{rt:lm} that
\begin{eqnarray*}
r(t,q,1/2)<1,&&\forall t\in(0,1],~\forall q\in(0,0.17].
\end{eqnarray*}
That is, to prove (\ref{tau:2}), it is left to verify that
$r(t,q,1/2)< 1$ on $(t,q)\in[0.0105,2/3]\times[0.17,0.9181]$. Our
idea is to subdivide the region $[0.0105,2/3]\times[0.17,0.9181]$
into the union of small squares with the type
$[t_r,t_{r+1}]\times[q_s,q_{s+1}]$. Then, by the monotonicity (a),
(b) and (c), we can estimate $r(t,q,1/2)$ on this square by the
corner points as follows:
\begin{eqnarray}
&&\max_{t\in[t_r,t_{r+1}]}\max_{q\in[q_s,q_{s+1}]}r(t,q,1/2)\nonumber\\
&\le&r_{1}(t_r,q_s,1/2)+r_{2}(t_{r+1},q_{s+1},1/2)+
r_{3}(t_r,q_{s+1},1/2).\label{mesh-est}
\end{eqnarray}
If the evaluation by computer shows that (\ref{mesh-est}) is less
than one, we are done with the square. Otherwise, the estimation
might not be tight enough so that we have to subdivide the square
into finer meshes.

Our calculation shows that, by covering
$[0.0105,2/3]\times[0.17,0.9181]$ with
$$[0.0105,0.6667]\times[0.17,0.9172]\bigcup
[0.0105,0.66667]\times[0.9172,0.91809]\bigcup
[0.0105,0.666667]\times[0.91809,0.9181],
$$
we can get the desired result as
\begin{eqnarray}
&&\max_{r\in\{105,106,\cdots,6666\}}\{
\max_{s\in\{1700,1701,\cdots,9171\}} \left\{r_{1}(t_r,q_s,1/2)
\right. \nonumber\\
&& ~~~~~~~~~~~~~~~~~~~ \left.+r_{2}(t_{r+1},q_{s+1},1/2)
+r_{3}(t_{r+1},q_{s+1},1/2)\right\} \} < 1, \nonumber
\end{eqnarray}
where $t_r=0.0001r$ and $q_s=0.0001s$;
\begin{eqnarray}
&&\max_{r\in\{1050,1051,\cdots,66666\}}  \{
\max_{s\in\{91720,91721,\cdots,91808\}} \left\{r_{1}(t_r,q_s,1/2) \right. \nonumber\\
&& ~~~~~~~~~~~~~~~~~~~ \left.+r_{2}(t_{r+1},q_{s+1},1/2)
+r_{3}(t_{r+1},q_{s+1},1/2)\right\} \} < 1,  \nonumber
\end{eqnarray}
where $t_r=0.00001r$ and $q_s=0.00001s$ in this partition; and
\begin{eqnarray}
&& \max_{r\in\{10500,10501,\cdots,666666\}}
\left\{r_{1}(t_l,q_{91809},1/2) \right.
\nonumber\\
&& ~~~~~~~~~~~~~~~~~~~ \left. +r_{2}(t_{r+1},q_{91810},1/2)
+r_{3}(t_{r},q_{91810},1/2)\right\} < 1.  \nonumber
\end{eqnarray}
where $t_r=0.000001r$.

All the above calculations were carried out by computer and the
proof is thus complete.
\end{proof}

\begin{remark}
Let $q\rightarrow 0^+$. Then $p_q\rightarrow 1$ and it follows from
(\ref{tau:2}) that $\delta_{2k}\le \frac{1}{2}$. In other words,
$\delta_{2k}\le \frac{1}{2}$ can not be further improved based on
the sufficient condition (\ref{tau:1}). Furthermore, one can verify
that
\[
r(0.064,0.9182,1/2)>1.0000002,
\]
implying that the condition $q\in(0,0.9181]$ in Theorem  \ref{thm:2}
is tight up to the fourth decimal digit.
\end{remark}

\begin{remark}
Based on the analysis in \cite{Lai11}, the threshold $q_0$ in
Theorem \ref{thm:11} can be estimated to be around $0.0513$, far
less than the number 0.9181 reported in our Theorem \ref{thm:2}.
\end{remark}

From the above analysis, if the sparse recovery is to be exact for
any $q\in(0,1)$, we may require a tighter restricted isometric
constant than $\delta_{2k}=0.5$. This is the spirit of Theorem
\ref{thm:09}, which we shall show immediately an improvement on
their result.

To investigate the issue, we look into the case when $q=1, p_q=1/4$.
The sufficient condition (\ref{tau:2}) becomes
\begin{equation}
r(t,1,\delta_{2k}) =2\delta_{2k} +(1-3 \delta_{2k}/2)t
-(2-25\delta_{2k}/16)t^{2} <1,~\forall t\in (0,1].\label{q=1}
\end{equation}
Since $2-25\delta_{2k}/16>0$, $r(t,1,\delta_{2k})$ is a concave
parabola of $t\in (0,1]$. The sufficient condition in (\ref{q=1})
holds if and only if the equation $r(t,1,\delta_{2k})=1$ has no
solution. Namely, we need
%
%
\[
(1-3\delta_{2k}/2)^2+4(2-25\delta_{2k}/16)(2\delta_{2k}-1)<0.
\]
It follows that
\[
\delta_{2k}< \frac{77-\sqrt{1337}}{82}~\left(>
0.493109\right).
\]

We therefore have the following theorem.
\begin{theorem}
Suppose $\delta_{2k}\le 0.4931$. Then for any $q\in(0,1)$ each
minimizer $x^q$ of the $\ell_q$ minimization (\ref{lq}) is the
sparse solution of (\ref{sp}).
\end{theorem}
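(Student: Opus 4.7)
The plan is to confirm the sufficient condition $r(t,q,\delta_{2k})<1$ from (\ref{tau:2}) for every $q\in(0,1)$ and $t\in(0,1]$ once $\delta_{2k}\le 0.4931$, whence Lemma \ref{lem:03} delivers $x^{(q)}=x^*$. Since monotonicity (d) shows $r$ is increasing in $\delta_{2k}$, it suffices to work at the boundary value $\delta_{2k}=0.4931$.

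I would split the range into $q\in(0,0.9181]$ and $q\in(0.9181,1)$. The first subrange is immediate from Theorem \ref{thm:2}, which already yields $r(t,q,\delta_{2k})<1$ for $\delta_{2k}\le 1/2$, and $0.4931<1/2$. The second subrange is the real work.

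For $q\in(0.9181,1)$ I would start from the $q=1$ analysis already carried out: since $p_q$ extends continuously to $p_1=1/4$, the condition $r(t,1,\delta_{2k})<1$ reduces to a concave parabola in $t$ whose discriminant yields the threshold $\delta_{2k}<(77-\sqrt{1337})/82\approx 0.493109$. Because $0.4931$ is strictly below this threshold, there is a uniform gap $\sup_{t\in[0,1]}r(t,1,0.4931)<1-\varepsilon$ for some $\varepsilon>0$. Joint continuity of $r$ in $(t,q)$ on the compact set $[0,1]\times[0.9181,1]$ then propagates this gap to a strip $[1-\eta,1]$. On the leftover compact rectangle $[0,1]\times[0.9181,1-\eta]$ I would replicate the mesh-based verification used for Theorem \ref{thm:2}: dispose of $t\in(1/(2-\delta_{2k}),1]$ via Lemma \ref{rt:lm} (whose proof still applies since $0.4931<1/2$), then cover the remainder with subsquares $[t_r,t_{r+1}]\times[q_s,q_{s+1}]$ and estimate
\[
\max_{[t_r,t_{r+1}]\times[q_s,q_{s+1}]} r(t,q,0.4931) \le r_1(t_r,q_s,0.4931)+r_2(t_{r+1},q_{s+1},0.4931)+r_3(t_r,q_{s+1},0.4931)
\]
via monotonicities (a)--(c), refining meshes wherever the upper bound is not yet strictly below one.

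The hard part will be the region near the maximiser of the $q=1$ parabola, at $t\approx 0.106$. Since $0.4931$ sits only about $10^{-5}$ beneath $(77-\sqrt{1337})/82$, the value $r(t,1,0.4931)$ remains only marginally under $1$ there, and the corner-point estimate loosens rapidly unless the discretisation is extremely fine. The continuity argument above is what reduces this potentially unbounded refinement to a finite mesh problem on a region bounded away from $q=1$, paralleling the computer-assisted tail of the proof of Theorem \ref{thm:2}.
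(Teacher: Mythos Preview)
Your proposal is correct and follows the paper's strategy almost exactly: reduce to $\delta_{2k}=0.4931$ by monotonicity (d), invoke Theorem~\ref{thm:2} for $q\le 0.9181$, use Lemma~\ref{rt:lm} to discard $t>1/(2-\delta_{2k})$, and finish the remaining compact rectangle by the corner-point bound (\ref{mesh-est}) on a computer-checked mesh.

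The one tactical difference is your handling of the boundary $q\to 1$. You insert a continuity buffer: from $\sup_t r(t,1,0.4931)<1$ (guaranteed by the discriminant computation, since $0.4931<(77-\sqrt{1337})/82$) and uniform continuity on the compact set $[0,1]\times[0.9181,1]$, you peel off a strip $[1-\eta,1]$ and mesh only up to $1-\eta$. The paper does not bother with this; it simply pushes the mesh verification all the way to $q=1$, refining from step $10^{-4}$ on $[0.9181,0.9992]$ to step $10^{-5}$ on $[0.9992,1]$. Your worry about ``potentially unbounded refinement'' turns out to be unfounded in practice --- the paper shows that $10^{-5}$ suffices --- though your continuity argument is certainly valid and has the merit of making the boundary behaviour analytically transparent rather than purely numerical. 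Either way the core machinery is identical.
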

\begin{proof}
Since $r(t,q,\delta_{2k})$ is an increasing function of
$\delta_{2k}$, we fix $\delta_{2k}$ at $0.4931$.
According to Lemma \ref{rt:lm}, $r(t,q,0.4931)<1$ for $t\in
(0.663614,1]$. Therefore, it is sufficient to check the maximum of
$r(t,q,0.4931)$ over the region $(t,q)\in [0,0.6637]\times
[0.9181,1]$. This is done by dividing the region into two parts
$$[0,0.6637]\times[0.9181,1]=[0,0.6637]\times[0.9181,0.9992]\bigcup[0,0.6637]\times
[0.9992,1],$$ the first of which with a mesh size 0.0001 whereas the
latter with 0.00001. The computation has been verified by computer
as follows
\begin{eqnarray*}
&&\max_{ t\in[0,0.6637]}\max_{q\in[0.9181,1]}r(t,q,0.4931)\\
&\le&\max \{ \{
\max_{r=0,1,2,\cdots,6636} \max_{s=9181,9182,\cdots,9991}
\left\{r_{1}(t_r,q_s,0.4931) \right.\\
&&\left.~~~~~~~~~~~~+r_{2}(t_{r+1},q_{s+1},0.4931)
+r_{3}(t_{r},q_{s+1},0.4931)\right\} \};\\
&&\max_{u=0,1,2,\cdots,66369} \{\max_{v=99920,99921,\cdots,99999}
\left\{r_{1}(t_u,q_v,0.4931) \right.\\
&&\left.~~~~~~~~~~~~+r_{2}(t_{u+l},q_{v+1},0.4931)
+r_{3}(t_{u},q_{v+1},0.4931)\right\} \} \}\\
&<&1
\end{eqnarray*}
where $t_r=0.0001r, q_s=0.0001s, t_u=0.00001u, q_v=0.00001v$ and the
proof is thus complete.
\end{proof}

\section{Conclusion}
In this paper, based on the extended ``converse of a generalized
Cauchy-Schwarz inequality'' of quasi-norm,
 we establish new sufficient conditions under
which the minimizer of the $\ell_q$ minimization is the sparse
solution of the corresponding underdetermined linear system. More
precisely, we show that if the restricted isometry constant
$\delta_{2k}\le 1/2$,
then for any $q\le 0.9181$ the solution of $\ell_q$ minimization
also solves the $\ell_0$ minimization. Furthermore, if
$\delta_{2k}\le 0.4931$, then for any $q\in(0,1]$, the minimizer of
$\ell_q$ minimization remains optimal in $\ell_0$ minimization. Our
results strongly improves those reported previously in the
literature. We believe the value $0.4931$ can be improved to
$\frac{77-\sqrt{1337}}{82}$ with a new proof, but now it is still
open. The other future research is to extend the presented results
to the
 rank minimization problem.

\section*{Acknowledgments}
This research was undertaken while Y. Hsia visited National Cheng
Kung University, Tainan, Taiwan.

\end{document}